\newcommand{\Ceil}[1]{\ensuremath{\left\lceil#1\right\rceil}}
\newcommand{\lrA}[1]{\ensuremath{\left(#1\right)}}
\newcommand{\lrB}[1]{\ensuremath{\left[#1\right]}}
\newcommand{\lrC}[1]{\ensuremath{\left\{#1\right\}}}
\def\A{\mathcal{A}}
\def\O{\mathcal{O}}
\def\T{\mathcal{T}}
\def\OPT{\mbox{OPT}}
\DeclareAcronym{CVRP}{
short = CVRP,
long = capacitated vehicle routing problem
}
\DeclareAcronym{ITP}{
short = ITP,
long = iterated tour partition
}
\DeclareAcronym{LP}{
short = LP,
long = linear programming
}
\DeclareAcronym{TSP}{
short = TSP,
long = traveling salesman problem
}
\DeclareAcronym{EPTAS}{
short = EPTAS,
long = efficient PTAS
}
\DeclareAcronym{FPTAS}{
short = FPTAS,
long = fully PTAS
}
\DeclareAcronym{VRP}{
short = VRP,
long = vehicle routing problem
}
\DeclareAcronym{PTAS}{
short = PTAS,
long = polynomial-time approximation scheme
}
\DeclareAcronym{QPTAS}{
short = QPTAS,
long = quasi PTAS
}
\newtheorem{theorem}{Theorem}
\newtheorem{lemma}{Lemma}
\newtheorem{definition}{Definition}
\title{Improved Approximations for the Unsplittable Capacitated Vehicle Routing Problem}
\author[1,2]{Jingyang Zhao}
\author[1]{Mingyu Xiao}
\affil[1]{University of Electronic Science and Technology of China, Chengdu, China}
\affil[2]{Kyung Hee University, Yongin-si, South Korea}
\date{}
\begin{document}

\maketitle

\begin{abstract}
The capacitated vehicle routing problem (CVRP) is one of the most extensively studied problems in combinatorial optimization.
In this problem, we are given a depot and a set of customers, each with a demand, embedded in a metric space.
The objective is to find a set of tours, each starting and ending at the depot, operated by the capacititated vehicle at the depot to serve all customers, such that all customers are served and the total travel cost is minimized.
We consider the unplittable variant, where the demand of each customer must be served entirely
by a single tour.
Let $\alpha$ denote the current best-known approximation ratio for the metric traveling salesman problem. 
The previous best approximation ratio was $\alpha+1+\ln 2+\delta<3.1932$ for a small constant $\delta>0$ (Friggstad et al., Math. Oper. Res. 2025), which can be further improved by a small constant using the result of Blauth, Traub, and Vygen (Math. Program. 2023).
In this paper, we propose two improved approximation algorithms.
The first algorithm focuses on the case of fixed vehicle capacity and achieves an approximation ratio of $\alpha+1+\ln\left(2-\frac{1}{2}y_0\right)<3.0897$, where $y_0>0.39312$ is the unique root of $\ln\lrA{2-\frac{1}{2}y}=\frac{3}{2}y$.
The second algorithm considers general vehicle capacity and achieves an approximation ratio of $\alpha+1+y_1+\ln\lrA{2-2y_1}+\delta<3.1759$ for a small constant $\delta>0$, where $y_1>0.17458$ is the unique root of $\frac{1}{2} y_1+ 6 (1-y_1)\lrA{1-e^{-\frac{1}{2} y_1}} =\ln\lrA{2-2y_1}$.
Both approximations can be further improved by a small constant using the result of Blauth, Traub, and Vygen (Math. Program. 2023).
\end{abstract}

\maketitle

\section{Introduction}
The \acp{VRP} form a fundamental class of combinatorial optimization problems with numerous applications in logistics, transportation, and supply chain management~\citep{toth2014vehicle}, and have been studied extensively in operations research.
Among the various \acp{VRP}, the \ac{CVRP}~\citep{dantzig1959truck}  is one of the most well-studied problems.

In the \ac{CVRP}, we are given an undirected complete graph $G=(V\cup\{r\}, E)$ with a metric edge-cost function $c: E\to\mathbb{R}_{\geq 0}$ that satisfies the triangle inequality.
The vertex set $V=\{v_1,\dots,v_n\}$ represents $n$ customers, where each customer $v\in V$ has an integer demand $d_v\in\mathbb{Z}_{\geq 1}$.
A vehicle with integer capacity $k\in\mathbb{Z}_{\geq 1}$ is initially located at the depot $r$.
A tour is a walk that starts and ends at the depot, serves customers with total demand at most $k$, and has a cost equal to the sum of the costs of all edges in the tour.

The objective of the \ac{CVRP} is to find a set of tours for the vehicle that serve all customers while minimizing the total travel cost.
In the \emph{unsplittable} version of the problem, each customer's demand must be served entirely by a single tour, whereas in the \emph{splittable} version, a customer's demand may be served by multiple tours.
If each customer has unit demand, the problem is referred to as the \emph{unit-demand} version.

When $k=1$ or $k=2$, the \ac{CVRP} can be solved in polynomial time~\citep{AsanoKTT97+}. However, for each fixed $k\geq3$, the problem becomes APX-hard, even in the unit-demand case~\citep{AsanoKTT97} (see also the restatement in the appendix of \citep{DBLP:journals/orl/ZhaoX25}).
When $k=\infty$, the \ac{CVRP} reduces to the well-known metric \ac{TSP}, which admits a $\frac{3}{2}$-approximation algorithm~\citep{christofides1976worst}. 
Recently, \citet{KarlinKG21,DBLP:conf/ipco/KarlinKG23} improved the approximation ratio to strictly less than $\frac{3}{2} - 10^{-36}$.
In the following, we use $\alpha$ to denote the current best-known approximation ratio for the metric \ac{TSP}, regardless of the specific algorithm achieving it.

\citet{HaimovichK85} introduced the \ac{ITP} algorithm for the unit-demand \ac{CVRP}. 
Based on an $\alpha$-approximate \ac{TSP} tour on $V$, the \ac{ITP} algorithm achieves an approximation ratio of $\alpha(1-\frac{\Ceil{n/k}}{n})+\frac{\Ceil{n/k}}{n/k}$.
Later, \citet{altinkemer1990heuristics} showed that, given an $\alpha$-approximate \ac{TSP} tour on $V\cup \{r\}$, the \ac{ITP} algorithm achieves an approximation ratio of $\alpha+1-\frac{\alpha}{k}$ for both the unit-demand \ac{CVRP} and the splittable \ac{CVRP}.
Moreover, \citet{altinkemer1987heuristics} modified the \ac{ITP} algorithm to handle the unsplittable \ac{CVRP}, achieving an approximation ratio of $\alpha+2-\frac{2\alpha}{k}$ for even $k$ and $\alpha+2-\frac{\alpha}{k}$ for odd $k$.

Subsequently, \citet{jansen1993bounds} obtained an approximation ratio of $\alpha+\frac{k-\alpha}{\Ceil{k/2}}$ for the unsplittable \ac{CVRP}, which equals $\alpha+2-\frac{2\alpha}{k}$ for even $k$, and $\alpha+2-\frac{2\alpha}{k}-\frac{2(k-\alpha)}{k(k+1)}$ for odd $k$.
Hence, this result improves upon the previous approximation ratio of $\alpha+2-\frac{\alpha}{k}$ for odd $k$.
Later, \citet{BompadreDO06} made several improvements.
For any $\alpha \geq 1$, they improved the approximation ratio by a term of at least $\frac{1}{3k^3}$ for all three versions of the \ac{CVRP}.
In the specific case of $\alpha = \frac{3}{2}$, the improvement is at least $\frac{1}{4k^2}$ for the splittable and unit-demand cases, and at least $\frac{1}{3k^2}$ for the unsplittable case.

For the case of general vehicle capacity, one significant progress was made by \citet{blauth2022improving}. They improved the approximation ratio to $\alpha+1-\varepsilon$ for the splittable and unit-demand cases, and to $\alpha+2-2\varepsilon$ for the unsplittable case, where $\varepsilon$ is a constant related to $\alpha$, with $\varepsilon>\frac{1}{3000}$ when $\alpha=\frac{3}{2}$.
\citet{friggstad2025improved} obtained two further improvements for the unsplittable \ac{CVRP}. The first is an $(\alpha+1.75)$-approximation algorithm using a combinatorial method, while the second is an $(\alpha+\ln 2+\frac{1}{1-\delta})$-approximation algorithm based on \ac{LP} rounding, with a running time of $n^{\O({1}/{\delta})}$. 
They also showed that both approximations can be further improved by a small constant through the method from~\citep{blauth2022improving}.

The \ac{CVRP} has attracted independent interest for small values of $k$, such as $k=3$~\citep{BazganHM05} and $k=4$~\citep{4cvrp}. 
For the unit-demand \ac{CVRP} with $k=3$, \citet{BazganHM05} obtained an approximation ratio of $3-\frac{4}{3}\rho+\varepsilon<1.934$, where $\rho=\frac{4}{5}$ denotes the best-known approximation ratio for the maximum-cost \ac{TSP} on general graphs~\citep{DudyczMPR17}.
For the unit-demand \ac{CVRP} with $k=4$, \citet{4cvrp} obtained an approximation ratio of $\frac{7}{4}$.

More recently, \citet{zhao2026improved} proposed a $(\frac{5}{2}-\Theta(\frac{1}{\sqrt{k}}))$-approximation algorithm for the splittable and unit-demand cases, and a $(\frac{5}{2}+\ln 2-\Theta(\frac{1}{\sqrt{k}}))$-approximation algorithm for the unsplittable case.
They also designed several algorithms for the cases $k\in\{3,4,5\}$.
In particular, for the splittable and unit-demand cases, the approximation ratio is $\frac{3}{2}$ for $k=3$ and $\frac{3}{2}$ for $k=4$; 
for the unsplittable case, the approximation ratio is $\frac{3}{2}$ for $k=3$, $\frac{7}{4}$ for $k=4$, and $2.157$ for $k=5$.
When $k \leq 1.7\times 10^7$, the approximation ratio achieved in \citep{zhao2026improved} attains the current best-known bound for all three variants of the \ac{CVRP}.

\subsection{Other Related Work}
Although only limited progress has been made for the \ac{CVRP} on general metrics, a substantial body of work has focused on structured metrics.
We briefly review some representative results below.

\noindent\textbf{The \ac{CVRP} in Euclidean Spaces.}
Consider the unit-demand \ac{CVRP}. 
In $\mathbb{R}^2$, a \ac{PTAS} is known for $k=\O(\log\log n)$~\citep{HaimovichK85}, $k=\O(\frac{\log n}{\log\log n})$ or $k=\Omega(n)$~\citep{AsanoKTT97+}, and $k\leq 2^{\log^{f(\varepsilon)}n}$~\citep{AdamaszekCL10}.
Moreover, a \ac{QPTAS} with running time $n^{\log^{\O(1/\varepsilon)}n}$ is known for arbitrary $k$~\citep{DasM15}.
This running time was later improved to $n^{\O(\log^6 n/\varepsilon^5)}$~\citep{JayaprakashS22}, where the authors also established a \ac{QPTAS} for graphs of bounded treewidth, bounded doubling metrics, and bounded highway dimension with arbitrary $k$.
In $\mathbb{R}^\ell$, for any fixed $\ell$, a \ac{PTAS} is known for $k=\O(\log^{1/\ell}n)$~\citep{KhachayD16}.

Furthermore, \citet{abs-2209-05520} proposed a $(2+\varepsilon)$-approximation algorithm for the unplittable \ac{CVRP} in $\mathbb{R}^2$, while \citet{friggstad2026breaching} proposed a $(2-\tau)$-approximation algorithm for the unit-demand \ac{CVRP} in $\mathbb{R}^2$, for some constant $\tau>10^{-5}$.

\noindent\textbf{The \ac{CVRP} on Tree Metrics.}
On tree metrics, the splittable \ac{CVRP} is NP-hard~\citep{LabbeLM91}, while the unsplittable \ac{CVRP} is NP-hard to approximate within a factor better than $\frac{3}{2}$~\citep{GoldenW81} unless $\textrm{P}=\textrm{NP}$.
For the splittable \ac{CVRP} on tree metrics, \citet{cvrptree1} proposed a $\frac{3}{2}$-approximation algorithm.
This ratio was subsequently improved to $\frac{\sqrt{41}-4}{4}$~\citep{cvrptree2}, then to $\frac{4}{3}$~\citep{cvrptree3}, and eventually to a \ac{PTAS}~\citep{cvrptree5}. 
For the unsplittable \ac{CVRP} on tree metrics, \citet{LabbeLM91} proposed a 2-approximation algorithm, which was later improved to a tight approximation ratio of $\frac{3}{2}+\varepsilon$~\citep{abs220205691}.

\noindent\textbf{The \ac{CVRP} on Other Structured Metrics.}
Consider the unit-demand \ac{CVRP} with $k=\O(1)$. 
\citet{BeckerKS17,BeckerKS18} proposed a \ac{QPTAS} for planar metrics and graphs of bounded genus, a \ac{PTAS} for graphs of bounded highway dimension, and an exact algorithm with running time $\O(n^{\text{tw}\cdot k})$ for graphs of treewidth $\text{tw}$.
Later, \citet{BeckerKS19} proposed a \ac{PTAS} for planar metrics.
\citet{Cohen-AddadFKL20} proposed an \ac{EPTAS} for graphs of bounded treewidth and bounded genus, and the first \ac{QPTAS} for minor-free metrics. 
More recently, \citet{abs-2203-15627} proposed an \ac{EPTAS} for planar metrics with \emph{almost linear} running time, as well as the first \ac{EPTAS} for minor-free metrics.
\citet{DBLP:conf/focs/Cohen-AddadLPP23} proposed a \ac{QPTAS} for minor-free metrics with arbitrary $k$.
Finally, for the \ac{CVRP} on graphic metrics, \citet{momke2022capacitated} proposed a $1.95$-approximation algorithm.

Further results on \acp{CVRP} and \acp{TSP} can be found in recent surveys~\citep{chen2023approximation,saller2025survey} and the book~\citep{VJ:24:BOOK}.

\subsection{Our Contributions}
In this paper, we propose improved approximation algorithms for the unsplittable \ac{CVRP}.

First, we revisit the \ac{ITP} algorithm proposed by \citet{altinkemer1987heuristics} and its variant $\delta$-\ac{ITP} for the unsplittable \ac{CVRP} introduced by \citet{friggstad2025improved}.
We slightly improve the performance of $\delta$-\ac{ITP} by using trivial tours to serve customers with demand $d_v>\frac{1}{2}k$.
We denote the resulting algorithm by $\delta$-\ac{ITP}+.
Based on $\delta$-\ac{ITP}+, we design two algorithms for the unsplittable \ac{CVRP}.

The first algorithm (\textsc{Alg}.1) focuses on the case where the vehicle capacity $k$ is fixed. It runs in $n^{\O(k)}$ time and achieves an approximation ratio of $\alpha+1+\ln\lrA{2-\frac{1}{2}y_0}<\alpha+1.5897$, where $y_0>0.39312$ is the unique root of the equation $\ln\lrA{2-\frac{1}{2}y}=\frac{3}{2}y$.
Hence, the approximation ratio is at most $3.0897$ when $\alpha=1.5$.

\textsc{Alg}.1 is based on two sub-algorithms.
The first sub-algorithm (\textsc{SubAlg.1}) is the same as the previous combinatorial algorithm of \citet{friggstad2025improved}, which first applies a matching-based approach to serve all customers with demand $d_v>\frac{1}{3}k$ and then uses $\delta$-\ac{ITP} with $\delta=\frac{1}{3}$ to serve the remaining customers.
The second sub-algorithm (\textsc{SubAlg.2}) combines the \ac{LP}-rounding technique from \citep{friggstad2025improved} with $\delta$-\ac{ITP}+.
Since each tour serves at most $k$ customers, there are at most $n^{\O(k)}$ distinct tours. 
\textsc{SubAlg.2} formulates and solves an \ac{LP} whose variables correspond to these $n^{\O(k)}$ tours, and randomly rounds the resulting solution to select a set of tours serving some customers. 
It then uses $\delta$-\ac{ITP}+ by simply setting $\delta=\frac{1}{3}$ to serve the remaining customers.
Through a careful analysis, we show that taking the better solution returned by these two sub-algorithms achieves the claimed approximation ratio.

The second algorithm (\textsc{Alg}.2) focuses on the case of general vehicle capacity $k$.
It runs in $n^{\O(1/\delta)}$ time and achieves an approximation ratio of $\alpha+1+y_1+\ln\lrA{2-2y_1}+\delta<\alpha+1.6759$ for any small constant $0<\delta<10^{-5}$, where $y_1>0.17458$ is the unique root of the equation $\frac{1}{2} y_1+ 6 (1-y_1)\lrA{1-e^{-\frac{1}{2} y_1}} =\ln\lrA{2-2y_1}$.
Hence, the approximation ratio is at most $3.1759$ when $\alpha=1.5$.

\textsc{Alg}.2 is based on three sub-algorithms.
The first sub-algorithm is the same as \textsc{SubAlg.1}.
The second sub-algorithm (\textsc{SubAlg.3}) is similar to \textsc{SubAlg.2}. However, since $n^{\O(k)}$ is not polynomial when $k$ is general, it considers only $n^{\O(1/\delta)}$ distinct tours that serve customers with demand $d_v>\delta\cdot k$. Accordingly, it formulates and solves an \ac{LP} whose variables correspond to these $n^{\O(1/\delta)}$ tours, and randomly rounds the resulting solution to select a set of tours serving some customers with demand $d_v>\delta\cdot k$. The remaining customers are then served using $\frac{1}{3}$-\ac{ITP}+.
The third sub-algorithm (\textsc{SubAlg.4}) also employs the \ac{LP}-rounding technique as in \textsc{SubAlg.3}, but uses $\delta$-\ac{ITP}+, instead of $\frac{1}{3}$-\ac{ITP}+, to serve the remaining customers.

Finally, we show that both approximations can be further improved by a small constant using the result of \citet{blauth2022improving}.

\section{Preliminary}
In the \ac{CVRP}, an input instance is represented as $I=(G,c,d,k)$, where $G=(V\cup\{r\}, E)$ is an undirected complete graph with vertex set $V\cup \{r\}$ and edge set $E=(V\cup\{r\})\times (V\cup\{r\})$. 
Here, $V=\{v_1,\dots,v_n\}$ denotes the customers, and $r$ represents the depot, where a vehicle of capacity $k\in\mathbb{Z}_{\geq1}$ is initially located.
Each edge has a non-negative cost given by $c: E\to \mathbb{R}_{\geq0}$, and $c$ is a \emph{metric}, satisfying $c(x,x)=0$ (reflexivity), $c(x,y)=c(y,x)$ (symmetry), and $c(x,y)\leq c(x,z)+c(z,y)$ (triangle inequality) for all $x,y,z\in V\cup\{r\}$.
The demand of each customer is specified by $d:V\to\mathbb{Z}_{\geq1}$, where $d_v$ denotes the demand of customer $v$.

A \emph{walk} $W$ in $G$, denoted by $(v_1,v_2,\dots, v_\ell)$, is a sequence of vertices in $V\cup\{r\}$, where vertices may be repeated, and each consecutive pair of vertices is connected by an edge in $E$. 
Let $E(W)$ denote the (multi-)set of edges it contains, and let $V(W)$ denote the set of customers visited by $W$. 
The cost of $W$ is defined as $c(W)=\sum_{e\in E(W)}c(e)$.
A \emph{path} is a walk in which no vertex appears more than once. 
The first and last vertices of a path are referred to as its \emph{terminals}.
A \emph{closed} walk is a walk in which the first and the last vertices coincide, and
a \emph{cycle} is a closed walk in which only the first and the last vertices are the same.
A \emph{tour} is a cycle of the form $(r, v_1,v_2,\dots, v_\ell, r)$, where each $v_i$ is a customer.
A tour visiting a single customer, e.g., $(r,v_1,r)$, is called \emph{trivial}.
A \ac{TSP} tour on $V'\subseteq V\cup\{r\}$ is a cycle that visits every vertex in $V'$ in the induced subgraph $G[V']$.

Given a closed walk, by the triangle inequality, one can skip repeated vertices along the walk to obtain a cycle of non-increasing cost.
Such an operation is called \emph{shortcutting}.
For any cost function $c: X\to \mathbb{R}_{\geq0}$, we extend it to subsets of $X$ by defining $w(Y) = \sum_{x\in Y} w(x)$ for any $Y\subseteq X$.

The \ac{CVRP} can be described as follows.

\begin{definition}[\textbf{The \ac{CVRP}}]
Given a \ac{CVRP} instance $I=(G,c,d,k)$, the objective is to construct a set of tours $\T$ with a demand assignment $\lambda: V\times\T\to \mathbb{Z}_{\geq 1}$ such that
\begin{enumerate}
\item each tour delivers serves total demand at most $k$, i.e., $\sum_{v\in V(T)}\lambda_{v,T}\leq k$ for every $T\in \T$,
\item each tour serves only customers on the tour, i.e., $\sum_{v\in V\setminus V(T)}\lambda_{v,T}=0$ for every $T\in \T$,
\item the demands of all customers are fully satisfied, i.e., $\sum_{T\in\T}\lambda_{v,T}=d_v$ for every $v\in V$,
\end{enumerate}
and $c(\T)=\sum_{T\in\T}c(T)$ is minimized.
\end{definition}

By the triangle inequality, we assume with loss of generality that $\lambda_{v,T}=0$ if and only if $v\in V\setminus V(T)$.
According to the properties of the demand assignment, we distinguish the following variants of the problem.
If each customer's demand must be served entirely by a single tour, the \ac{CVRP} is called the \emph{unsplittable} \ac{CVRP}.
If a customer's demand may be served by multiple tours, the \ac{CVRP} is called the \emph{splittable} \ac{CVRP}.
If each customer has unit demand, the \ac{CVRP} is called the \emph{unit-demand} \ac{CVRP}.

We mainly consider the unsplittable \ac{CVRP}.
By definition, we have $d_v \leq k$ for all $v\in V$.
By scaling down the vehicle capacity and customer demands by a factor of $k$, we may assume without loss of generality that the vehicle capacity is $k=1$, and that each customer has a demand of at most $1$.
Moreover, we use $\OPT$ to denote the cost of an optimal solution to the unsplittable \ac{CVRP}.

\section{The Iterated Tour Partition Algorithms}
\citet{HaimovichK85} proposed the \ac{ITP} algorithm for the (unit-demand) \ac{CVRP}, and \citet{altinkemer1987heuristics} later adapted the \ac{ITP} algorithm to handle the unsplittable \ac{CVRP}.
Given a \ac{TSP} tour $\A$ on $V\cup \{r\}$, the \ac{ITP} algorithm partitions $\A$ into paths, each serving at most $k$ demand, and then connects the terminals of each path to the depot to form a tour.
The \ac{ITP} algorithm can be implemented in $\O(n^2)$ time.
We have the following result.

\begin{lemma}[\citet{altinkemer1987heuristics}]\label{itp}
Let $\A$ be a \ac{TSP} tour on $V\cup \{r\}$. There exists an $\O(n^2)$-time algorithm, \ac{ITP}, that computes a solution to the unsplittable \ac{CVRP} with total cost at most $c(\A) + \sum_{v\in V}4 d_v\cdot c(r,v)$. 
\end{lemma}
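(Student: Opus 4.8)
\emph{Proof strategy.}
The plan is to prove Lemma~\ref{itp} by the randomized tour-partitioning technique of \citet{HaimovichK85}, adapted to unsplittable demands, and then to derandomize. Under the normalization $k=1$ we have $d_v\le 1$ for every customer $v$. Write the given tour as $\A=(r,v_1,v_2,\dots,v_n,r)$, put $s_0=0$ and $s_i=\sum_{j\le i}d_{v_j}$, and set $D=s_n$; view customer $v_i$ as occupying the interval $(s_{i-1},s_i)$ of length $d_{v_i}$ on $[0,D]$. Draw $\theta$ uniformly from $[0,1)$ and place a \emph{breakpoint} at every value $\theta,\theta+1,\theta+2,\dots$ lying in $(0,D)$. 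Since each interval has length at most $1$, it contains at most one breakpoint; call $v_i$ a \emph{straddler} if its interval contains one.

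Given $\theta$, I would construct a feasible solution as follows: route every straddler on its own trivial tour, and let the non-straddlers fall, in tour order, into maximal blocks $P=(\ell,\dots,q)$ of consecutive customers, turning each such block into the tour $(r,\ell,\dots,q,r)$ whose interior edges are exactly the edges of $\A$ inside $P$. This is a feasible unsplittable solution: a block lies strictly between two consecutive breakpoints (which are one unit of cumulative demand apart) or between an endpoint of $[0,D]$ and the nearest breakpoint, so its total demand is less than $1$; a straddler has demand at most $1$; and every customer is served by exactly one tour.

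For a fixed $\theta$ I would bound the cost as follows. The trivial tours together cost $\sum_{v\text{ straddler}}2c(r,v)$. For the tour on a block $P=(\ell,\dots,q)$, the endpoint $\ell$ is adjacent along $\A$ to $r$ (if $P$ is the first block) or to the straddler $w_\ell$ immediately preceding $P$, so $c(r,\ell)\le c(r,w_\ell)+c(w_\ell,\ell)$ with the convention that $w_\ell=r$ for the first block (so $c(r,w_\ell)=0$ there); symmetrically $c(r,q)\le c(r,w_q)+c(q,w_q)$ for the straddler $w_q$ immediately following $P$. Taking one $\A$-edge per endpoint in this way, together with all interior edges of all blocks, yields a set of pairwise distinct edges of $\A$ of total cost at most $c(\A)$; the point is that when two straddlers are adjacent the block between them is empty, so no $\A$-edge is charged twice. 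Since moreover each straddler serves as $w_\ell$ for at most one block and as $w_q$ for at most one block, we get $\sum_{\text{blocks}}\bigl(c(r,w_\ell)+c(r,w_q)\bigr)\le\sum_{v\text{ straddler}}2c(r,v)$. Combining, the block-tours cost at most $c(\A)+2\sum_{v\text{ straddler}}c(r,v)$, so the whole solution costs at most $c(\A)+4\sum_{v\text{ straddler}}c(r,v)$.

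Finally I would take the expectation over $\theta\in[0,1)$: customer $v_i$ is a straddler exactly when $\theta$ lies, modulo $1$, in a set of measure $d_{v_i}$ — this uses $d_{v_i}\le 1$, so the relevant sub-intervals of $[0,1)$ are disjoint — hence $\EEE{\sum_{v\text{ straddler}}c(r,v)}=\sum_v d_v\,c(r,v)$ and the expected cost is at most $c(\A)+4\sum_v d_v\,c(r,v)$. To derandomize, note that the straddler set is constant on each of the $O(n)$ sub-intervals of $[0,1)$ determined by the values $s_i\bmod 1$; running the construction once per sub-interval and returning the cheapest outcome attains the expectation bound, and as each construction is linear the total time is $O(n^2)$ (the tour $\A$ is given). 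I expect the delicate step to be the edge-to-edge bookkeeping in the third paragraph — verifying that the $\A$-edges used to reroute block endpoints to the depot, together with all block interiors, really are distinct, especially across runs of consecutive straddlers and at the two ends of $\A$ — and it is precisely here that the unsplittable requirement costs the extra factor $2$ over the splittable bound, since a straddler cannot simply be split across a breakpoint.
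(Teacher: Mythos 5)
The paper does not prove Lemma~\ref{itp}; it is cited directly from Altinkemer and Gavish (1987). Your argument is a correct, self-contained proof of the claimed bound, and it is worth noting that it is organized differently from the classical one. The standard Altinkemer--Gavish proof runs the splittable/fractional tour-partitioning at half capacity (segments of demand at most $k/2$) and then assigns each boundary customer wholly to one of its two adjacent segments, which is where the factor $4d_v$ (rather than the splittable $2d_v$) arises. Your variant instead keeps full-capacity breakpoints and serves each straddler by a trivial tour; the straddler then pays $2c(r,v)$ for its own tour and up to $2c(r,v)$ more for rerouting the endpoints of the two adjacent blocks, again $4c(r,v)$ with straddling probability $d_v$. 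The bookkeeping you flag as delicate does go through: every edge of $\A$ is either interior to a block, the unique edge charged for one block endpoint, or an edge between two consecutive straddlers (charged to nobody), so no edge is counted twice, and each straddler is adjacent to at most one block on each side. Two small points to tighten: (i) the feasibility claim that a maximal block has demand less than $1$ uses that every breakpoint lies strictly inside some customer's interval, which holds only for $\theta$ avoiding the finitely many critical values $s_i\bmod 1$; in the derandomization you must therefore evaluate interior representatives of the $O(n)$ subintervals, as you implicitly do. (ii) A customer with $d_v=1$ has an interval of length exactly $1$ and is a straddler with probability $1$, which is still consistent with the bound. Neither point affects correctness.
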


Moreover, we have the following well-known lower bounds on the optimal cost of the unsplittable \ac{CVRP}.

\begin{lemma}[\citet{HaimovichK85,altinkemer1990heuristics}]\label{lowerbound}
Let $\A^*$ denote an optimal \ac{TSP} tour on $V\cup \{r\}$. Then, it holds that $\OPT \geq \max\lrC{\sum_{v\in V}2d_v\cdot c(r,v),\  c(\A^*)}$.
\end{lemma}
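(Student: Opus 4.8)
The plan is to establish the two lower bounds separately, fixing an optimal solution given by a set of tours $\T$ with demand assignment $\lambda$, so that $c(\T)=\sum_{T\in\T}c(T)=\OPT$.

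For the bound $\OPT\geq c(\A^*)$, I would form the edge multiset $H=\bigcup_{T\in\T}E(T)$. Since each $T\in\T$ is a cycle, every vertex of $H$ has even degree; since every customer has $d_v\geq 1$ and hence lies on some tour, while every tour contains the depot $r$, the subgraph induced by the positive-degree vertices of $H$ is connected and spans $V\cup\{r\}$. Hence $H$ admits an Eulerian circuit $W$ with $c(W)=\sum_{T\in\T}c(T)=\OPT$. Shortcutting $W$ produces a \ac{TSP} tour $\A$ on $V\cup\{r\}$ with $c(\A)\leq c(W)=\OPT$ by the triangle inequality, and optimality of $\A^*$ yields $c(\A^*)\leq c(\A)\leq\OPT$.

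For the bound $\OPT\geq\sum_{v\in V}2d_v\cdot c(r,v)$, I would argue tour by tour. Fix a tour $T=(r,v_1,\dots,v_\ell,r)\in\T$ and let $v_j\in V(T)$ maximize $c(r,v_j)$. Cutting $T$ at $v_j$ into the two subpaths $(r,\dots,v_j)$ and $(v_j,\dots,r)$ and applying the triangle inequality to each shows $c(T)\geq 2c(r,v_j)=2\max_{v\in V(T)}c(r,v)$. Since $\lambda$ is supported on $V(T)$ and $\sum_{v}\lambda_{v,T}\leq k=1$ (using the normalization $k=1$), this gives $c(T)\geq 2\sum_{v}\lambda_{v,T}\,c(r,v)$. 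Summing over all $T\in\T$ and using $\sum_{T\in\T}\lambda_{v,T}=d_v$ gives $\OPT=\sum_{T\in\T}c(T)\geq 2\sum_{v\in V}c(r,v)\sum_{T\in\T}\lambda_{v,T}=\sum_{v\in V}2d_v\,c(r,v)$. Combining the two bounds completes the proof.

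I expect this to be essentially routine; the only steps needing a little care are verifying that $H$ is indeed connected and Eulerian so that the shortcutting argument is legitimate (trivial tours, and customers visited by several tours, cause no difficulty), and recalling that the clean form $\sum_{v\in V}2d_v\,c(r,v)$ of the radial bound depends on the normalization $k=1$; without it one would obtain $\frac{2}{k}\sum_{v\in V}d_v\,c(r,v)$.
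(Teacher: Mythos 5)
Your proof is correct and is exactly the standard argument from the cited sources (the paper itself states this lemma as a citation without proof): the Eulerian-circuit-plus-shortcutting argument for the TSP bound, and the per-tour bound $c(T)\geq 2\max_{v\in V(T)}c(r,v)\geq 2\sum_v\lambda_{v,T}\,c(r,v)$ for the radial bound. Your closing remark about the normalization $k=1$ being needed for the clean form of the radial bound is also accurate.
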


\citet{friggstad2025improved} proposed a variant of the \ac{ITP} algorithm for the unsplittable \ac{CVRP}, denoted as $\delta$-\ac{ITP}.
Fix a constant $0\leq \delta \leq \frac{1}{2}$. 
Customers with demand at most $\delta$ are called \emph{small}, and let
$V^\delta_s \coloneq \{v\in V \mid d_v \leq \delta\}$ denote the set of small customers.
We have the following result.

\begin{lemma}[\citet{friggstad2025improved}]\label{ditp}
Let $\A$ be a \ac{TSP} tour on $V\cup \{r\}$. There exists an $\O(n^2)$-time algorithm, $\delta$-\ac{ITP}, that computes a solution to the unsplittable \ac{CVRP} with total cost at most
\[
c(\A) + \frac{1}{1-\delta} \sum_{v\in V^\delta_s}2 d_v\cdot c(r,v) + \frac{2}{1-\delta} \sum_{v\in V\setminus V^\delta_s}2 d_v\cdot c(r,v) - \frac{\delta}{1-\delta}\sum_{v\in V\setminus V^\delta_s}{2 c(r,v)}
\]
\end{lemma}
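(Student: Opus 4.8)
The plan is to run the standard iterated tour partition on the TSP tour $\A$, but to treat small and non-small customers asymmetrically when bounding the detour cost. First I would fix an orientation of $\A$ and, starting from the depot $r$, greedily cut $\A$ into consecutive segments $P_1, P_2, \dots$ so that each segment carries total demand at most $1$ and is maximal with this property; as usual the number of segments is at most $2\sum_{v\in V} d_v$ up to rounding, and one then pairs up consecutive segments so that each pair carries demand more than $1$ (this is where the factor $\frac{1}{1-\delta}$ versus $\frac{2}{1-\delta}$ split will come from). For each resulting path $P$ we form a tour by adding the two edges from $r$ to the terminals of $P$; by the triangle inequality the added cost is at most $2\max_{v\in V(P)} c(r,v)$, and summing the path costs recovers $c(\A)$. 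So the total cost is bounded by $c(\A)$ plus a "radial" term that I now need to charge against $\sum_v d_v\, c(r,v)$.

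The crux is the charging argument for the radial term, and this is the step I expect to be the main obstacle. For a block $B$ (either a single heavy-enough segment or a paired-up pair of segments) with $\sum_{v\in V(B)} d_v =: d_B$, I get a charge of roughly $2\max_{v\in V(B)} c(r,v)$, and I want to distribute this over the demand sitting in $B$. Since after pairing every block has $d_B$ bounded below, the "density" of charge per unit demand is controlled. The key quantitative observation, following \citet{friggstad2025improved}, is: if block $B$ contains at least one non-small customer (demand $> \delta$), then $d_B$ is already $> \delta$ even before pairing, so such a block need not be paired and its charge $2 c(r,u)$ (for $u$ the farthest vertex) is absorbed at rate $\frac{2}{1-\delta}$ against $2 d_u c(r,u)$-type terms — more precisely against $\sum_{v\in B} 2 d_v c(r,v)$ after using $c(r,u) \le \sum \cdots$, with the correction $-\frac{\delta}{1-\delta}\sum 2 c(r,v)$ accounting for the fact that a non-small customer already "pays its own way." Blocks consisting only of small customers must be paired (two consecutive such segments), which doubles their count, giving the rate $\frac{1}{1-\delta}$ on $\sum_{v\in V^\delta_s} 2 d_v c(r,v)$ — wait, I have the rates reversed from the statement, so the bookkeeping needs care: I would track, per block, the inequality $2\max_{v\in V(B)} c(r,v) \le \frac{2}{d_B}\sum_{v\in V(B)} d_v c(r,v)$ when $d_B$ is large and split the analysis by whether $B$ is a pair of small-only segments (rate $\tfrac{1}{1-\delta}$ on small demand after the pairing bound $d_B > 1-\delta$ is \emph{not} what happens — rather a single small segment has $d_B$ possibly tiny, so pairing forces $d_B > 1 - \delta$... ) and reconcile signs by carrying the $-\frac{\delta}{1-\delta}\sum_{v\notin V^\delta_s} 2c(r,v)$ term explicitly through each non-small block.

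Concretely, the clean way to organize it is: after the greedy cut, call a segment \emph{full} if its demand exceeds $1-\delta$ and \emph{deficient} otherwise; a deficient segment's next customer (in the cyclic order) has demand $> \delta$, i.e., is non-small, because otherwise the greedy step would have included it. Hence each deficient segment can be charged against the non-small customer immediately following it, injectively, and the total number of "effective blocks" I need to route is at most (number of full segments) plus (number of non-small customers). Bounding (number of full segments) $\le \frac{1}{1-\delta}\sum_v d_v$ and handling the non-small customers with trivial-tour-style radial charges of $2c(r,v)$ each, then substituting $1 = \frac{1-\delta}{1-\delta}$ and $d_v = (d_v - \delta) + \delta$ for non-small $v$, should after routine algebra collapse exactly to the claimed bound $c(\A) + \frac{1}{1-\delta}\sum_{v\in V^\delta_s} 2 d_v c(r,v) + \frac{2}{1-\delta}\sum_{v\in V\setminus V^\delta_s} 2 d_v c(r,v) - \frac{\delta}{1-\delta}\sum_{v\in V\setminus V^\delta_s} 2 c(r,v)$.

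Finally, I would note the running time: the greedy cut, the deficient-to-non-small matching, and the tour formation are all linear in $n$ given $\A$, and even a naive implementation is $\O(n^2)$, matching the claim. The only genuinely delicate point remaining is making the deficient-segment-to-non-small-customer assignment well-defined and injective at the wrap-around near $r$; I would handle the depot as a zero-demand "customer" so that no segment straddles $r$ and the cyclic argument goes through unchanged.
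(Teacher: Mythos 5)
First, note that the paper does not prove this lemma at all: it is quoted verbatim from \citet{friggstad2025improved}, so there is no in-paper proof to match your argument against. Your proposal must therefore stand on its own as a reconstruction of the $\delta$-\ac{ITP} analysis, and as written it does not: it is a sketch with a self-acknowledged unresolved sign/rate confusion in the middle, an appeal to ``routine algebra'' exactly at the point where the asymmetric coefficients $\frac{1}{1-\delta}\cdot 2d_v$ (small) versus $\frac{2}{1-\delta}\cdot 2d_v-\frac{\delta}{1-\delta}\cdot 2$ (non-small) would have to be derived, and a pairing step that is not even well defined (pairing consecutive maximal segments ``so that each pair carries demand more than $1$'' produces blocks that violate the capacity constraint, so it cannot be the unit by which tours are formed).

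The concrete mathematical error is the charging inequality you lean on: $2\max_{v\in V(B)}c(r,v)\le \frac{2}{d_B}\sum_{v\in V(B)}d_v\cdot c(r,v)$ is false in general --- take a block whose farthest customer has negligible demand and whose heavy customers sit next to the depot; the right-hand side is then close to $2\min_{v}c(r,v)$, not $2\max_v c(r,v)$. This is precisely why no correct \ac{ITP}-type analysis bounds a tour's two depot connections by the radius of the \emph{farthest} customer in its own segment and charges that to the segment's own demand. The standard device (and the one needed here) is a prefix-sum/averaging argument over all cyclic shifts of the partition threshold: with cut threshold $t=(1-\delta)$ after normalization, each customer $v$ is adjacent to a cut in a $\frac{d_v}{t}$-fraction of the shifts (with non-small customers entering with an inflated effective weight, which is where the $2d_v-\delta$ term and the observation ``the customer that overflows a deficient segment must have demand $>\delta$'' actually get used), so the best shift has radial cost at most $\frac{1}{t}\sum_v 2\,w_v\, c(r,v)$ for the appropriate weights $w_v$. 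You correctly identify the overflow observation, but without the shift-averaging (or an equivalent threshold-crossing charging that assigns each cut to the specific customer whose demand interval straddles it), the per-block bound fails and the claimed coefficients cannot be recovered. To repair the argument you would need to (i) state the partition rule precisely (threshold $(1-\delta)$, weights $d_v$ for small and $2d_v-\delta$ for non-small, or whatever variant \citet{friggstad2025improved} use), (ii) verify that each resulting group is feasibly and unsplittably packable into one tour, and (iii) run the shift-averaging to get the radial term --- none of which is currently present.
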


Notably, when $\delta=0$, the $\delta$-\ac{ITP} algorithm reduces to the classic \ac{ITP} algorithm in Lemma~\ref{itp}.

Let $V^\delta_b\coloneq \{v\in V\mid \delta< d_v\leq \frac{1}{2}\}$ and $V^\delta_l\coloneq \{v\in V\mid \frac{1}{2}< d_v\leq 1\}$ denote the sets of \emph{big} customers and \emph{large} customers, respectively.
Instead of directly applying the $\delta$-\ac{ITP} algorithm in Lemma~\ref{ditp}, we slightly improve its performance by pre-processing customers with large demands.
Specifically, for each large customer $v \in V^\delta_l$, we first assign a trivial tour $(r,v,r)$ to serve its demand.
We then apply the $\delta$-\ac{ITP} algorithm to serve all customers in $V^\delta_s \cup V^\delta_b$.
We refer to the resulting algorithm as $\delta$-\ac{ITP}$+$.
We obtain the following result.

\begin{lemma}\label{nditp}
Let $\A$ be a \ac{TSP} tour on $V\cup \{r\}$. There exists an $\O(n^2)$-time algorithm, $\delta$-\ac{ITP}+, that computes a solution to the unsplittable \ac{CVRP} with total cost at most
\[
c(\A) + \frac{1}{1-\delta}\sum_{v\in V^\delta_s}2 d_v\cdot c(r,v) + \frac{2}{1-\delta}\sum_{v\in V^\delta_b}2 d_v\cdot c(r,v) - \frac{\delta}{1-\delta}\sum_{v\in V^\delta_b}{2 c(r,v)}+\sum_{v\in V^\delta_l}2 c(r,v)
\]
\end{lemma}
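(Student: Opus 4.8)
The plan is to reduce the statement directly to Lemma~\ref{ditp} applied to the sub-instance obtained after removing the large customers. First I would spell out the algorithm precisely: recall that with $0\le\delta\le\frac12$ the demand classes partition the customers as $V=V^\delta_s\cup V^\delta_b\cup V^\delta_l$ (disjoint, since small means $d_v\le\delta$, big means $\delta<d_v\le\frac12$, large means $\frac12<d_v\le1$). For every large customer $v\in V^\delta_l$ the algorithm outputs the trivial tour $(r,v,r)$, which legally serves $d_v\le 1=k$ units of demand and has cost $2c(r,v)$; summing over $v\in V^\delta_l$ contributes exactly the last term $\sum_{v\in V^\delta_l}2c(r,v)$ in the claimed bound. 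It remains to serve the customers in $V' \coloneqq V^\delta_s\cup V^\delta_b$, and for this we invoke $\delta$-\ac{ITP}.

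To feed $\delta$-\ac{ITP} a \ac{TSP} tour on $V'\cup\{r\}$, I would shortcut the given tour $\A$ on $V\cup\{r\}$ past every vertex of $V^\delta_l$, obtaining a tour $\A'$ on $V'\cup\{r\}$ with $c(\A')\le c(\A)$ by the triangle inequality. Applying Lemma~\ref{ditp} with input tour $\A'$ to the unsplittable \ac{CVRP} instance $(G[V'\cup\{r\}],c,d,k)$ (same metric, same demands, same capacity), and using the key observation that the small customers of this sub-instance are still exactly $V^\delta_s$ — so that $V'\setminus V^\delta_s=V^\delta_b$ — yields a feasible solution serving all of $V'$, with every tour carrying demand at most $1$, of total cost at most
\[
c(\A')+\frac{1}{1-\delta}\sum_{v\in V^\delta_s}2 d_v\cdot c(r,v)+\frac{2}{1-\delta}\sum_{v\in V^\delta_b}2 d_v\cdot c(r,v)-\frac{\delta}{1-\delta}\sum_{v\in V^\delta_b}2 c(r,v).
\]
Taking the union of these tours with the trivial tours for $V^\delta_l$ gives a feasible solution to the original unsplittable \ac{CVRP}: every customer is served by exactly one tour, and every tour carries demand at most $k=1$. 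Its cost is the sum of the two parts, and replacing $c(\A')$ by the upper bound $c(\A)$ gives precisely the inequality claimed in the lemma. For the running time, the shortcutting and the creation of the trivial tours take $\O(n)$ time, and the single call to $\delta$-\ac{ITP} takes $\O(n^2)$ time by Lemma~\ref{ditp}, for a total of $\O(n^2)$.

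I do not expect a real obstacle here; the only two points requiring a line of care are (i) verifying that the small/big classification is inherited unchanged by the sub-instance on $V'$, so that the coefficients coming out of Lemma~\ref{ditp} match those in the statement, and (ii) checking that the trivial tours are legitimate, i.e.\ $d_v\le k$ for every large customer, which holds because after the normalization $k=1$ we have $d_v\le 1$. If one preferred to avoid re-invoking the classification argument, one could instead observe that in the execution of $\delta$-\ac{ITP} each large customer, having $d_v>\frac12\ge\delta$, already lies on its own path segment, so $\delta$-\ac{ITP}$+$ merely detaches those segments as trivial tours; the reduction above is cleaner, though.
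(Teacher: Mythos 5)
Your proposal is correct and matches the paper's (largely implicit) argument exactly: serve each large customer by a trivial tour of cost $2c(r,v)$, shortcut $\A$ past $V^\delta_l$ to get a no-more-expensive tour on $(V^\delta_s\cup V^\delta_b)\cup\{r\}$, and invoke Lemma~\ref{ditp} on that sub-instance, noting that its non-small customers are precisely $V^\delta_b$. The feasibility and running-time checks are also as the paper intends, so there is nothing to add.
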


Since $\frac{2 d_v-\delta}{1-\delta}\geq 1$ holds for every large customer $v\in V^\delta_l$, the cost bound in Lemma~\ref{nditp} is no larger than that in Lemma~\ref{ditp}.
Although this improvement is slight, it is essential for achieving our approximation results.

Following the terminology in \citep{DBLP:conf/isaac/0001024a}, we define 
\begin{equation}\label{int}
\int^r_lx^{t}dF(x)\coloneqq \frac{\sum_{v\in V:l<d_v\leq r}2 d^t_v\cdot c(r,v)}{\sum_{v\in V}2 d_v\cdot c(r,v)},\quad\quad\mbox{where}\quad t\in\{0,1\}.
\end{equation}

By definition, we have
\begin{equation}\label{inteq}
\int^1_0xdF(x)=1.
\end{equation}
Moreover, for any $0\leq l\leq r\leq 1$, we have the following inequality: 
\begin{equation}\label{intineq}
l\cdot\int^r_l1dF(x)<\int^r_lxdF(x)\leq r\cdot\int^r_l1dF(x).
\end{equation}

We now obtain the following approximation guarantee by combining Lemmas~\ref{lowerbound} and~\ref{nditp}.

\begin{lemma}\label{approxratio}
Using an $\alpha$-approximate \ac{TSP} tour $\A$ on $V\cup \{r\}$, the $\delta$-\ac{ITP}+ algorithm in Lemma~\ref{nditp} computes a solution to the unsplittable \ac{CVRP} with approximation ratio at most
\[
\alpha + \frac{1}{1-\delta}\int_0^\delta xdF(x) + \frac{2}{1-\delta}\int_\delta^\frac{1}{2} xdF(x) - \frac{\delta}{1-\delta}\int_\delta^\frac{1}{2} 1dF(x) + \int_\frac{1}{2}^1 1dF(x).
\]
\end{lemma}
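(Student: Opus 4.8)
The plan is to derive the approximation ratio by taking the cost bound from Lemma~\ref{nditp}, substituting an $\alpha$-approximate TSP tour for $\A$, and dividing through by $\OPT$ using the two lower bounds from Lemma~\ref{lowerbound}. Concretely, I would start from
\[
c(\A) + \frac{1}{1-\delta}\sum_{v\in V^\delta_s}2 d_v\, c(r,v) + \frac{2}{1-\delta}\sum_{v\in V^\delta_b}2 d_v\, c(r,v) - \frac{\delta}{1-\delta}\sum_{v\in V^\delta_b} 2 c(r,v)+\sum_{v\in V^\delta_l}2 c(r,v),
\]
and handle the TSP term and the radial terms separately.

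First I would bound $c(\A) \le \alpha\, c(\A^*) \le \alpha\, \OPT$, using that $\A$ is an $\alpha$-approximate tour and the lower bound $\OPT \ge c(\A^*)$ from Lemma~\ref{lowerbound}. This contributes the leading $\alpha$ term. Next, for the remaining four sums, I would observe they are all of the form (constant)$\,\cdot \sum_{v\in S} 2 d_v\, c(r,v)$ or (constant)$\,\cdot\sum_{v\in S} 2 c(r,v)$ for $S\in\{V^\delta_s, V^\delta_b, V^\delta_l\}$. Dividing each such sum by $\sum_{v\in V} 2 d_v\, c(r,v)$ turns it exactly into one of the integral notations defined in~\eqref{int}: $\sum_{v\in V_s^\delta} 2 d_v\, c(r,v) / \sum_{v\in V} 2 d_v\, c(r,v) = \int_0^\delta x\,dF(x)$, similarly $\sum_{v\in V_b^\delta} 2 d_v\, c(r,v)/(\cdot) = \int_\delta^{1/2} x\,dF(x)$, $\sum_{v\in V_b^\delta} 2 c(r,v)/(\cdot) = \int_\delta^{1/2} 1\,dF(x)$, and $\sum_{v\in V_l^\delta} 2 c(r,v)/(\cdot) = \int_{1/2}^1 1\,dF(x)$ (here the $t=0$ case of~\eqref{int} uses $d_v^0 = 1$). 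Then I would use the second lower bound $\OPT \ge \sum_{v\in V} 2 d_v\, c(r,v)$ from Lemma~\ref{lowerbound} to replace the denominator by $\OPT$, so that each radial sum divided by $\OPT$ is at most the corresponding integral expression.

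Putting these together, the total cost divided by $\OPT$ is at most
\[
\alpha + \frac{1}{1-\delta}\int_0^\delta x\,dF(x) + \frac{2}{1-\delta}\int_\delta^{1/2} x\,dF(x) - \frac{\delta}{1-\delta}\int_\delta^{1/2} 1\,dF(x) + \int_{1/2}^1 1\,dF(x),
\]
which is the claimed bound. The only point requiring a little care is the negative term $-\frac{\delta}{1-\delta}\sum_{v\in V_b^\delta} 2 c(r,v)$: since its coefficient is negative, replacing $\sum_{v\in V} 2 d_v\, c(r,v)$ by the larger quantity $\OPT$ in its denominator would move the bound in the \emph{wrong} direction. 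The clean way around this is to not bound each term against $\OPT$ individually, but rather to first show that the entire expression $\frac{1}{1-\delta}\sum_{v\in V_s^\delta} 2 d_v c(r,v) + \frac{2}{1-\delta}\sum_{v\in V_b^\delta} 2 d_v c(r,v) - \frac{\delta}{1-\delta}\sum_{v\in V_b^\delta} 2 c(r,v) + \sum_{v\in V_l^\delta} 2 c(r,v)$ is nonnegative — which follows because $\frac{2d_v - \delta}{1-\delta} \ge 1 > 0$ for $v \in V_b^\delta$ (as noted after Lemma~\ref{nditp}, using $d_v > \delta$) so each grouped coefficient of $c(r,v)$ is positive — and then divide the whole nonnegative quantity by $\sum_{v\in V} 2 d_v c(r,v) \le \OPT$ in one step. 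I expect this sign bookkeeping around the negative term to be the main (though minor) obstacle; everything else is a direct substitution of definitions~\eqref{int} and the two inequalities of Lemma~\ref{lowerbound}.
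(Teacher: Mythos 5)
Your proposal is correct and follows essentially the same route as the paper: the paper likewise bounds $c(\A)\le\alpha\,c(\A^*)\le\alpha\cdot\OPT$, factors the remaining radial terms as $\sum_{v\in V}2d_v\,c(r,v)$ times the bracketed combination of integrals from~\eqref{int}, and then replaces that factor by $\OPT$ via Lemma~\ref{lowerbound} in a single step, exactly the ``divide the whole nonnegative quantity at once'' manoeuvre you describe to handle the negative term. One tiny correction: for $v\in V^\delta_b$ the grouped coefficient is $\frac{2d_v-\delta}{1-\delta}$, which satisfies $\frac{2d_v-\delta}{1-\delta}>0$ because $d_v>\delta$, but is not in general $\ge 1$ (that inequality, noted after Lemma~\ref{nditp}, holds for \emph{large} customers $v\in V^\delta_l$ with $d_v>\frac12$); positivity is all your argument needs, so the conclusion is unaffected.
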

\begin{proof}
By Lemma~\ref{nditp}, the cost of the solution computed by the $\delta$-\ac{ITP}+ algorithm is at most
\begin{align*}
& c(\A) + \frac{1}{1-\delta}\sum_{v\in V^\delta_s}2 d_v\cdot c(r,v) + \frac{2}{1-\delta}\sum_{v\in V^\delta_b}2 d_v\cdot c(r,v) - \frac{\delta}{1-\delta}\sum_{v\in V^\delta_b}{2 c(r,v)}+\sum_{v\in V^\delta_l}2 c(r,v)\\
\leq\ & \alpha\cdot c(\A^*) + \sum_{v\in V}2 d_v\cdot c(r,v)\cdot\left(\frac{\frac{1}{1-\delta}\sum_{v\in V^\delta_s}2 d_v\cdot c(r,v) + \frac{2}{1-\delta}\sum_{v\in V^\delta_b}2 d_v\cdot c(r,v) }{\sum_{v\in V}2 d_v\cdot c(r,v)}\right.\\
&\quad\quad\left.+\frac{ - \frac{\delta}{1-\delta}\sum_{v\in V^\delta_b}{2 c(r,v)}+\sum_{v\in V^\delta_l}2 c(r,v)}{\sum_{v\in V}2 d_v\cdot c(r,v)}\right)\\
&= \alpha\cdot c(\A^*) + \sum_{v\in V}2 d_v\cdot c(r,v)\cdot\left(\frac{1}{1-\delta}\int_0^\delta xdF(x) + \frac{2}{1-\delta}\int_\delta^\frac{1}{2} xdF(x) - \frac{\delta}{1-\delta}\int_\delta^\frac{1}{2} 1dF(x) + \int_\frac{1}{2}^1 1dF(x)\right)\\
\leq\ & \lrA{\alpha + \frac{1}{1-\delta}\int_0^\delta xdF(x) + \frac{2}{1-\delta}\int_\delta^\frac{1}{2} xdF(x) - \frac{\delta}{1-\delta}\int_\delta^\frac{1}{2} 1dF(x) + \int_\frac{1}{2}^1 1dF(x)}\cdot\OPT,
\end{align*}
where the first inequality uses the fact that $\A$ is an $\alpha$-approximate \ac{TSP} tour on $V\cup\{r\}$, the second inequality follows from Lemma~\ref{lowerbound}, and the equality follows from the definition in \eqref{int}.
\end{proof}

\section{The Algorithm for the Unsplittable \ac{CVRP} with Fixed Capacity}
In this section, we present \textsc{Alg}.1, which focuses on the case where the vehicle capacity is fixed.
\textsc{Alg}.1 is based on two sub-algorithms, \textsc{SubAlg.1} and \textsc{SubAlg.2}, and achieves an approximation ratio of $\alpha+1+\ln\lrA{2-\frac{1}{2}y_0}<\alpha+1.5897$, where $y_0>0.39312$ is the unique root of the equation $\ln\lrA{2-\frac{1}{2}y}=\frac{3}{2}y$.
Hence, when $\alpha=1.5$, the approximation ratio of \textsc{Alg}.1 is at most $3.0897$.

Next, we introduce \textsc{SubAlg.1} and \textsc{SubAlg.2}, and then present the analysis of \textsc{Alg}.1.

\subsection{The \textsc{SubAlg}.1}
\citet{friggstad2025improved} proposed a matching-based algorithm for the unsplittable \ac{CVRP}.
Given a \ac{TSP} tour on $V\cup \{r\}$, the algorithm first serves all customers with demand greater than $\frac{1}{3}$ using a matching-based approach, which incurs a total cost of at most $\OPT$.
It then applies the $\frac{1}{3}$-\ac{ITP} algorithm in Lemma~\ref{ditp} to serve the remaining customers.

\begin{lemma}[\citet{friggstad2025improved}]\label{mitp}
Let $\A$ be a \ac{TSP} tour on $V\cup \{r\}$. There exists an $\O(n^3)$-time algorithm that computes a solution to the unsplittable \ac{CVRP} with total cost at most $c(\A) + \frac{3}{2} \sum_{v\in V^{1/3}_s}2 d_v\cdot c(r,v) + \OPT$.
\end{lemma}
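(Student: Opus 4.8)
The plan is to bound the cost of the two phases of the algorithm separately. Write $B \coloneqq V\setminus V^{1/3}_s = \{v\in V : d_v>\frac{1}{3}\}$ for the set of \emph{big} customers; the algorithm serves $B$ by a matching-based construction and then serves the small customers $V^{1/3}_s$ by $\frac{1}{3}$-\ac{ITP}. I will show the first phase costs at most $\OPT$ and the second at most $c(\A)+\frac{3}{2}\sum_{v\in V^{1/3}_s}2d_v\cdot c(r,v)$; summing the two bounds gives the claim, and the running times $\O(n^3)$ and $\O(n^2)$ of the two phases give the overall $\O(n^3)$.

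\emph{The matching phase.} The key structural fact is that, since every $v\in B$ has $d_v>\frac{1}{3}$ and the capacity is $1$, no tour can serve three customers of $B$; hence in an optimal solution every tour visits at most two customers of $B$. For each optimal tour that visits some customer of $B$, shortcut it to visit only its (at most two) customers in $B$ and the depot: by the triangle inequality this yields a tour of the form $(r,u,v,r)$ with $d_u+d_v\le 1$, or $(r,u,r)$, of cost no larger than the original tour. Summing over all optimal tours that meet $B$, we obtain a family of such ``short'' tours covering $B$ of total cost at most $\OPT$; this family induces a partition of $B$ into feasible pairs (pairs $\{u,v\}$ with $d_u+d_v\le 1$) and singletons. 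Therefore a \emph{minimum-cost} partition of $B$ into feasible pairs and singletons, where a pair $\{u,v\}$ costs $c(r,u)+c(u,v)+c(v,r)$ and a singleton $u$ costs $2c(r,u)$, has total cost at most $\OPT$. Such a minimum-cost partition is computed in $\O(n^3)$ time by reduction to minimum-weight perfect matching (add one dummy copy of each vertex of $B$ to absorb the singletons), and the algorithm outputs the corresponding tours.

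\emph{The \ac{ITP} phase.} Shortcut $\A$ to a \ac{TSP} tour $\A_s$ on $V^{1/3}_s\cup\{r\}$, so that $c(\A_s)\le c(\A)$, and apply the $\frac{1}{3}$-\ac{ITP} algorithm of Lemma~\ref{ditp} to the sub-instance whose customer set is $V^{1/3}_s$. In this sub-instance every customer is small (demand at most $\delta=\frac{1}{3}$), so there are no big or large customers, the last two sums in the bound of Lemma~\ref{ditp} are empty, and $\frac{1}{1-\delta}=\frac{3}{2}$. Hence the produced tours serve all of $V^{1/3}_s$ at total cost at most $c(\A_s)+\frac{3}{2}\sum_{v\in V^{1/3}_s}2d_v\cdot c(r,v)\le c(\A)+\frac{3}{2}\sum_{v\in V^{1/3}_s}2d_v\cdot c(r,v)$. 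Taking the union of the tours from the two phases serves every customer at total cost at most $c(\A)+\frac{3}{2}\sum_{v\in V^{1/3}_s}2d_v\cdot c(r,v)+\OPT$, as claimed.

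The main obstacle is the matching phase, and within it the claim that the optimal solution itself induces a feasible pairing of $B$ of cost at most $\OPT$: this requires the capacity argument (at most two big customers per tour) together with the shortcutting argument (deleting all non-big customers from each optimal tour does not increase its cost), after which optimality of the computed matching finishes it. The remaining steps are routine: shortcutting $\A$, specializing Lemma~\ref{ditp} to $\delta=\frac{1}{3}$ on the all-small sub-instance, and checking that a minimum-weight perfect matching on $\O(n)$ vertices is computable in $\O(n^3)$ time.
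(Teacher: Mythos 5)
Your proof is correct and follows exactly the approach the paper attributes to Friggstad et al.: a minimum-weight matching (with dummy vertices for singletons) serves the customers with $d_v>\frac{1}{3}$ at cost at most $\OPT$, justified by shortcutting the optimal tours (each of which contains at most two such customers by the capacity constraint), and $\frac{1}{3}$-\ac{ITP} on the shortcut tour handles $V^{1/3}_s$ with the stated bound. The paper only cites this result without proof, and your reconstruction, including the $\O(n^3)$ running-time accounting, fills in the details correctly.
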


\textsc{SubAlg}.1 simply uses an $\alpha$-approximate \ac{TSP} tour on $V\cup \{r\}$ as the input to the above algorithm.
By an argument analogous to the proof of Lemma~\ref{approxratio}, we obtain the following approximation guarantee.

\begin{lemma}\label{alg1ratio}
Given an $\alpha$-approximate \ac{TSP} tour $\A$ on $V\cup \{r\}$, \textsc{SubAlg}.1 computes, in $\O(n^3)$ time, a solution to the unsplittable \ac{CVRP} with approximation ratio at most
\[
\alpha + 1 + {\frac{3}{2}\int_0^\frac{1}{3} xdF(x)}.
\]
\end{lemma}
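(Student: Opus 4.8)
The plan is to mimic the proof of Lemma~\ref{approxratio}, combining the cost bound of Lemma~\ref{mitp} with the two lower bounds of Lemma~\ref{lowerbound}. First I would feed the given $\alpha$-approximate \ac{TSP} tour $\A$ on $V\cup\{r\}$ into the algorithm of Lemma~\ref{mitp}, which produces, in $\O(n^3)$ time, a feasible solution to the unsplittable \ac{CVRP} of cost at most $c(\A) + \frac{3}{2}\sum_{v\in V^{1/3}_s}2 d_v\cdot c(r,v) + \OPT$. The running time and feasibility are inherited verbatim; the cost of computing $\A$ itself is subsumed in $\O(n^3)$.

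Next I would bound each of the three summands against $\OPT$ separately. Since $\A$ is $\alpha$-approximate, $c(\A)\le \alpha\cdot c(\A^*)$, and by Lemma~\ref{lowerbound} we have $c(\A^*)\le \OPT$, so $c(\A)\le \alpha\cdot\OPT$. The last summand is exactly $\OPT$. For the middle summand I would write $\frac{3}{2}\sum_{v\in V^{1/3}_s}2 d_v\cdot c(r,v) = \frac{3}{2}\cdot\frac{\sum_{v\in V^{1/3}_s}2 d_v\cdot c(r,v)}{\sum_{v\in V}2 d_v\cdot c(r,v)}\cdot\sum_{v\in V}2 d_v\cdot c(r,v)$, identify the fraction as $\int_0^{1/3}x\,dF(x)$ using the definition in \eqref{int} (noting that $V^{1/3}_s=\{v\in V\mid d_v\le \tfrac13\}$ is precisely the index set $0<d_v\le \tfrac13$), and apply the second lower bound $\sum_{v\in V}2 d_v\cdot c(r,v)\le \OPT$ from Lemma~\ref{lowerbound}. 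Adding the three estimates yields cost at most $\bigl(\alpha + 1 + \frac{3}{2}\int_0^{1/3}x\,dF(x)\bigr)\cdot\OPT$, as claimed. (If $\sum_{v\in V}2 d_v\cdot c(r,v)=0$, i.e. no small customers contribute, the middle term is zero and the bound still holds trivially.)

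There is no genuine obstacle here — the statement is a direct recombination of results already established in the excerpt. The only points requiring care are bookkeeping ones: that the $\OPT$ appearing inside the bound of Lemma~\ref{mitp} is the optimal unsplittable \ac{CVRP} cost (so the $+\OPT$ term contributes exactly the additive $1$ in the ratio), and that the $dF$-notation is instantiated with the correct endpoints $l=0$, $r=\tfrac13$ rather than $r=\tfrac12$ as in Lemma~\ref{approxratio}, reflecting that \textsc{SubAlg.1} applies $\frac13$-\ac{ITP} only to customers of demand at most $\tfrac13$ after handling the rest by matching.
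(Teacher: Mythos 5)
Your proof is correct and follows exactly the route the paper intends: it instantiates the argument of Lemma~\ref{approxratio} with the cost bound of Lemma~\ref{mitp}, using $c(\A)\le\alpha\cdot c(\A^*)\le\alpha\cdot\OPT$, the additive $\OPT$ term for the $+1$, and the lower bound $\sum_{v\in V}2d_v\cdot c(r,v)\le\OPT$ together with the definition in \eqref{int} for the remaining term. Nothing further is needed.
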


\subsection{The \textsc{SubAlg}.2}
\textsc{SubAlg}.2 combines the \ac{LP}-rounding technique of \citep{friggstad2025improved} with the $\frac{1}{3}$-\ac{ITP}+ algorithm.

When the vehicle capacity $k$ is a fixed integer, and each customer has an integer demand, every tour in an optimal solution serves at most $k$ customers. Since there are $n$ customers in total, there are at most $n^k$ possible tours that may appear in an optimal solution; let $\T$ denote this set of tours.
For each tour $T\in\T$, we introduce a variable $x_T$.
\textsc{SubAlg}.2 formulates and solves the following \ac{LP}:
\begin{alignat}{3}
\text{minimize} & \quad & \sum_{T\in\T} c(T)\cdot x_T \tag{\textbf{\ac{LP}-1}}\\
\text{subject to} && \sum_{\substack{T\in \T:\\ v\in V(T)}}x_T \geq\   & 1, \quad && \forall\  v \in V, \notag\\
&& x_T \geq\   & 0, \quad && \forall\  T \in \T. \notag
\end{alignat}

An optimal fractional solution $\{x^*_T\}_{T\in\T}$ to the above \ac{LP} can be computed in $n^{\O(k)}$ time.
Moreover, the optimal \ac{LP} value provides a lower bound on $\OPT$, that is, 
\[
\sum_{T\in\T}c(T)\cdot x^*_T\leq\OPT.
\]

\textsc{SubAlg}.2 then applies the randomized rounding method of \citep{friggstad2025improved}.
Specifically, each tour $T\in\T$ is selected independently with probability $\min\{1,\gamma\cdot x^*T\}$ for some constant $\gamma\geq 0$, yielding a set of tours $\T_{\text{round}}$.
Due to randomness, some customers may not be covered by any tour in $\T_{\text{round}}$; let $V_{\text{left}}\subseteq V$ denote this set of uncovered customers.
The randomized rounding satisfies the following properties.

\begin{lemma}[\citet{friggstad2025improved}]
It holds that 
\[    
\mathbb{E}[c(\T_\text{round})] \leq \gamma\sum_{T\in\T}c(T)\cdot x^*_T\leq\gamma\cdot \OPT.
\]
Moreover, for each customer $v\in V$, 
\[
\Pr[v\in V_{\text{left}}] \leq e^{-\gamma}.
\] 
That is, the probability that $v$ is not covered by any tour in $\T_\text{round}$ is at most $e^{-\gamma}$.
\end{lemma}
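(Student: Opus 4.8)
The plan is to prove both inequalities by direct computation, using only linearity of expectation, the independence of the rounding choices, and the two facts already established: the optimal value of \textbf{\ac{LP}-1} lower-bounds $\OPT$, and for every customer $v$ the covering constraint $\sum_{T\in\T:\, v\in V(T)}x^*_T\geq 1$ holds.

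First I would handle the cost bound. Write $p_T\coloneqq\min\{1,\gamma\cdot x^*_T\}$ for the selection probability of tour $T$. Since the tours are added to $\T_\text{round}$ independently, linearity of expectation gives $\EE{c(\T_\text{round})}=\sum_{T\in\T}c(T)\cdot p_T$. As $c(T)\geq 0$ and $p_T\leq\gamma\cdot x^*_T$ for every $T$, this is at most $\gamma\sum_{T\in\T}c(T)\cdot x^*_T$, and the \ac{LP} lower bound $\sum_{T\in\T}c(T)\cdot x^*_T\leq\OPT$ yields the final inequality.

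Next I would bound $\PP{v\in V_\text{left}}$ for a fixed customer $v$. Let $\T_v\coloneqq\{T\in\T:v\in V(T)\}$. The customer $v$ lies in $V_\text{left}$ precisely when none of the tours in $\T_v$ is selected, so by independence $\PP{v\in V_\text{left}}=\prod_{T\in\T_v}(1-p_T)$. If $p_T=1$ for some $T\in\T_v$ (equivalently $\gamma\cdot x^*_T\geq 1$), this product is $0$ and the bound is trivial. Otherwise $p_T=\gamma\cdot x^*_T$ for all $T\in\T_v$; applying $1-x\leq e^{-x}$ to each factor gives $\prod_{T\in\T_v}(1-\gamma x^*_T)\leq\exp\lrA{-\gamma\sum_{T\in\T_v}x^*_T}$, and the covering constraint in \textbf{\ac{LP}-1} gives $\sum_{T\in\T_v}x^*_T\geq 1$, so the right-hand side is at most $e^{-\gamma}$.

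I do not expect any real obstacle: this is a standard independent-rounding analysis. The only point needing a little care is the case distinction on whether some $p_T$ is truncated to $1$, so I would organize the product estimate to dispose of that case first and apply the exponential bound only when every relevant probability equals $\gamma x^*_T$.
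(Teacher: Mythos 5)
Your proof is correct and is the standard independent-rounding argument (linearity of expectation plus $\min\{1,\gamma x^*_T\}\leq \gamma x^*_T$ for the cost, and independence with $1-x\leq e^{-x}$ and the covering constraint for the uncovering probability); the case split on truncated probabilities $p_T=1$ is handled properly. The paper itself states this lemma as a citation to \citet{friggstad2025improved} without reproving it, and your argument is exactly the one that reference uses, so there is nothing to add.
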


To serve the remaining customers in $V_{\text{left}}$, \textsc{SubAlg}.2 applies the $\frac{1}{3}$-\ac{ITP}+ algorithm in Lemma~\ref{nditp}, using a \ac{TSP} tour on $V_{\text{left}}\cup\{r\}$ obtained by shortcutting an $\alpha$-approximate \ac{TSP} tour on $V\cup\{r\}$.

By an argument analogous to the proof of Lemma~\ref{approxratio}, we obtain the following approximation guarantee.

\begin{lemma}\label{alg2ratio}
Given any constant $\gamma\geq 0$ and an $\alpha$-approximate \ac{TSP} tour $\A$ on $V\cup \{r\}$, \textsc{SubAlg}.2 computes, in $n^{\O(k)}$ time, a solution to the unsplittable \ac{CVRP} with an expected approximation ratio of at most
\[
\alpha + \gamma+e^{-\gamma}\cdot\lrA{\frac{3}{2}\int_0^\frac{1}{3} xdF(x) + 3\int_\frac{1}{3}^{\frac{1}{2}} xdF(x) - \frac{1}{2}\int_\frac{1}{3}^{\frac{1}{2}} 1dF(x)+\int_{\frac{1}{2}}^1 1dF(x)}.
\]
\end{lemma}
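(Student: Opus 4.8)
The plan is to follow the template of the proof of Lemma~\ref{approxratio}: decompose the cost of the computed solution into the cost of the rounded tours $\T_{\text{round}}$ plus the cost of the $\frac{1}{3}$-\ac{ITP}+ call that serves the uncovered customers $V_{\text{left}}$, bound each part in expectation over the randomized rounding, and then normalize by $\OPT$ using Lemma~\ref{lowerbound}. I would first fix a realization of the rounding, so that $\T_{\text{round}}$ and $V_{\text{left}}$ are determined; the total cost equals $c(\T_{\text{round}})$ plus the cost of $\frac{1}{3}$-\ac{ITP}+ on the sub-instance on $V_{\text{left}}\cup\{r\}$ whose input \ac{TSP} tour $\A'$ is obtained by shortcutting $\A$, so that $c(\A')\le c(\A)$. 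Applying Lemma~\ref{nditp} with $\delta=\frac{1}{3}$ (so that $\frac{1}{1-\delta}=\frac{3}{2}$, $\frac{2}{1-\delta}=3$, and $\frac{\delta}{1-\delta}=\frac{1}{2}$) to this sub-instance, and already \emph{merging the two contributions of each big customer}, the cleanup cost is at most
\[
c(\A') + \frac{3}{2}\sum_{v\in V^{1/3}_s\cap V_{\text{left}}} 2 d_v c(r,v) + \sum_{v\in V^{1/3}_b\cap V_{\text{left}}} (6 d_v-1)\,c(r,v) + \sum_{v\in V^{1/3}_l\cap V_{\text{left}}} 2 c(r,v),
\]
using that $3\cdot 2 d_v c(r,v) - \frac{1}{2}\cdot 2 c(r,v) = (6 d_v-1)c(r,v)$ for every big customer $v\in V^{1/3}_b$.

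Next I would take expectations, using linearity. For the rounded tours, the rounding guarantee stated above gives $\mathbb{E}[c(\T_{\text{round}})]\le\gamma\cdot\OPT$; for the input tour, $\mathbb{E}[c(\A')]\le c(\A)\le\alpha\,c(\A^*)\le\alpha\cdot\OPT$ by shortcutting, the $\alpha$-approximation guarantee, and Lemma~\ref{lowerbound}. For each of the three remaining sums, I would write it as $\sum_{v}\mathbf{1}[v\in V_{\text{left}}]\cdot w_v$, where the weight $w_v$ equals $\frac{3}{2}\cdot 2 d_v c(r,v)$, $(6 d_v-1)c(r,v)$, and $2 c(r,v)$ for small, big, and large customers respectively. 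The key observation is that $w_v\ge 0$ in all three classes (for big customers this is where $d_v>\frac{1}{3}$ is used, which forces $6 d_v-1>1>0$). Hence $\mathbb{E}[\mathbf{1}[v\in V_{\text{left}}]\cdot w_v]=\Pr[v\in V_{\text{left}}]\cdot w_v\le e^{-\gamma}w_v$ by the rounding guarantee, and summing these bounds and re-expanding $(6 d_v-1)c(r,v)=3\cdot 2 d_v c(r,v)-\frac{1}{2}\cdot 2 c(r,v)$ gives
\begin{align*}
\mathbb{E}[\text{cost}]\le{}& (\alpha+\gamma)\,\OPT\\
& + e^{-\gamma}\lrA{\frac{3}{2}\sum_{v\in V^{1/3}_s} 2 d_v c(r,v) + 3\sum_{v\in V^{1/3}_b} 2 d_v c(r,v) - \frac{1}{2}\sum_{v\in V^{1/3}_b} 2 c(r,v) + \sum_{v\in V^{1/3}_l} 2 c(r,v)}.
\end{align*}

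The bracketed quantity is a sum of nonnegative pieces (again since $6 d_v-1>0$ for big customers), so dividing by $\OPT\ge\sum_{v\in V}2 d_v c(r,v)$ (Lemma~\ref{lowerbound}) and invoking the definition in \eqref{int} --- with $t=1$ for the $2 d_v c(r,v)$ sums and $t=0$ for the $2 c(r,v)$ sums --- turns the bracket into $\frac{3}{2}\int_0^{1/3}x\,dF(x)+3\int_{1/3}^{1/2}x\,dF(x)-\frac{1}{2}\int_{1/3}^{1/2}1\,dF(x)+\int_{1/2}^{1}1\,dF(x)$, which is exactly the claimed expected approximation ratio. The degenerate case $\sum_{v\in V}2 d_v c(r,v)=0$ (every customer coincides with $r$ and $\OPT=0$) is trivial. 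The running time is dominated by solving \ac{LP}-1 in $n^{\O(k)}$ time, the $\O(n^2)$ cost of $\frac{1}{3}$-\ac{ITP}+ and of the rounding being lower order.

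The hard part --- indeed the only step that is not bookkeeping --- is the negative term $-\frac{1}{2}\cdot 2 c(r,v)$. The estimate $\Pr[v\in V_{\text{left}}]\le e^{-\gamma}$ is useful only against a nonnegative coefficient, so this term cannot be handled on its own: after taking the expectation its genuine coefficient is $\Pr[v\in V_{\text{left}}]$, which may be much smaller than $e^{-\gamma}$, and replacing it by $e^{-\gamma}$ would move the inequality in the wrong direction. The fix is to pair it with the positive term $3\cdot 2 d_v c(r,v)$ of the \emph{same} big customer before applying any probabilistic bound, so that the combined weight $(6 d_v-1)c(r,v)$ is manifestly nonnegative --- which holds precisely because big customers satisfy $d_v>\frac{1}{3}$. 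This is also the reason the negative term already appears absorbed inside the $e^{-\gamma}(\cdot)$ factor in the statement.
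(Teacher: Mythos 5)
Your proof is correct and follows exactly the route the paper intends (the paper only gestures at ``an argument analogous to the proof of Lemma~\ref{approxratio}'' combined with the randomized-rounding lemma): decompose into $c(\T_{\text{round}})$ plus the $\frac{1}{3}$-\ac{ITP}+ cleanup cost, take expectations, and normalize via Lemma~\ref{lowerbound}. Your explicit pairing of the negative term $-\frac{1}{2}\cdot 2c(r,v)$ with the positive term of the same big customer before applying $\Pr[v\in V_{\text{left}}]\leq e^{-\gamma}$ is precisely the detail needed to justify placing that term inside the $e^{-\gamma}(\cdot)$ factor, a point the paper leaves implicit.
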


\subsection{The Final Approximation Ratio}
\textsc{Alg}.1 simply returns the better solution produced by \textsc{SubAlg}.1 and \textsc{SubAlg}.2.
We are now ready to analyze the approximation ratio of \textsc{Alg}.1.

\begin{theorem}\label{res1}
Given an $\alpha$-approximate \ac{TSP} tour $\A$ on $V\cup \{r\}$, there exists an $n^{\O(k)}$-time algorithm for the unsplittable \ac{CVRP} with vehicle capacity $k$ that achieves an approximation ratio of $\alpha+1+\ln\lrA{2-\frac{y_0}{2}}<\alpha+1.5897$, where $y_0>0.39312$ is the unique root of the equation $\ln\lrA{2-\frac{y}{2}}=\frac{3}{2} y$.
\end{theorem}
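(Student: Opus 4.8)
The plan is to collapse the claim to a one-variable optimization by feeding the outputs of Lemmas~\ref{alg1ratio} and~\ref{alg2ratio} into the identities~\eqref{inteq}--\eqref{intineq}. Abbreviate $s=\int_0^{1/3}x\,dF(x)$, $p=\int_{1/3}^{1/2}x\,dF(x)$, $q=\int_{1/2}^1 x\,dF(x)$, $P=\int_{1/3}^{1/2}1\,dF(x)$, and $Q=\int_{1/2}^1 1\,dF(x)$. Then Lemma~\ref{alg1ratio} says \textsc{SubAlg}.1 has ratio $R_1=\alpha+1+\tfrac32 s$, and, writing $A:=\tfrac32 s+3p-\tfrac12 P+Q$, Lemma~\ref{alg2ratio} says \textsc{SubAlg}.2 has expected ratio $R_2=\alpha+\gamma+e^{-\gamma}A$ for the free parameter $\gamma\ge 0$. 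Since \textsc{Alg}.1 returns the cheaper of the two solutions and $\min(C_1,C_2)\le C_i$ holds pointwise (with $C_1$ deterministic), the expected cost of \textsc{Alg}.1 is at most $\min(R_1,R_2)\cdot\OPT$; so it suffices to show $\min(R_1,R_2)\le\alpha+1+\ln\lrA{2-\tfrac12 y_0}$ on every instance. (If a deterministic guarantee is wanted, the rounding in \textsc{SubAlg}.2 can be derandomized by conditional expectations.) The running time is $\O(n^3)$ for \textsc{SubAlg}.1 and $n^{\O(k)}$ for \textsc{SubAlg}.2, hence $n^{\O(k)}$ overall.

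Next I would eliminate $\gamma$. As a function of $\gamma\ge 0$, $\gamma+e^{-\gamma}A$ is minimized at $\gamma=\ln A$ when $A\ge 1$, with value $1+\ln A$; choosing this value (which is a legitimate constant for Lemma~\ref{alg2ratio}, computable from the input, or approximable by a fine grid of $\gamma$'s) yields $R_2\le\alpha+1+\ln A$. To justify $A\ge 1$: by~\eqref{intineq} we have $P\le 3p$ and $Q\ge q$, so $A\ge\tfrac32 s+3p-\tfrac32 p+q=\tfrac32(s+p)+q=\tfrac32(1-q)+q=\tfrac32-\tfrac12 q\ge 1$, using $s+p+q=1$ from~\eqref{inteq} and $q\le\int_0^1 x\,dF(x)=1$.

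The heart of the argument is then an upper bound on $A$. Again by~\eqref{intineq}, but now using that big customers have demand at most $\tfrac12$ (so $P\ge 2p$) and large customers have demand more than $\tfrac12$ (so $Q\le 2q$), I would derive $A=\tfrac32 s+3p-\tfrac12 P+Q\le\tfrac32 s+3p-p+2q=\tfrac32 s+2(p+q)=\tfrac32 s+2(1-s)=2-\tfrac12 s$, the last equalities using $p+q=1-s$. So a single inequality $A\le 2-\tfrac12 s$ absorbs all three demand classes; this is the step I expect to be the main (though not deep) obstacle, since it is where the three constraints must be combined in exactly the right way.

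Finally I would balance $R_1$ against $R_2$ at $s=y_0$. The map $y\mapsto\ln\lrA{2-\tfrac12 y}-\tfrac32 y$ is strictly decreasing on $[0,1]$, positive ($\ln 2$) at $y=0$ and negative ($\ln\tfrac32-\tfrac32$) at $y=1$, so $y_0$ exists, is unique in $(0,1)$, and satisfies $\ln\lrA{2-\tfrac12 y_0}=\tfrac32 y_0$; a direct numerical evaluation gives $0.39312<y_0<0.39313$, hence $\ln\lrA{2-\tfrac12 y_0}=\tfrac32 y_0<0.5897$. Now split on $s$: if $s\le y_0$ then $R_1=\alpha+1+\tfrac32 s\le\alpha+1+\tfrac32 y_0=\alpha+1+\ln\lrA{2-\tfrac12 y_0}$; if $s>y_0$ then $A\le 2-\tfrac12 s<2-\tfrac12 y_0$, so $R_2\le\alpha+1+\ln A<\alpha+1+\ln\lrA{2-\tfrac12 y_0}$. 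In either case $\min(R_1,R_2)\le\alpha+1+\ln\lrA{2-\tfrac12 y_0}<\alpha+1.5897$, completing the proof. The only care needed beyond the bound $A\le 2-\tfrac12 s$ is the $\gamma$-optimization (valid since Lemma~\ref{alg2ratio} holds for every constant $\gamma\ge 0$) and the routine monotonicity/numerics for $y_0$.
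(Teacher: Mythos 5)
Your proposal is correct and follows essentially the same route as the paper: the key inequality $A=\tfrac32 s+3p-\tfrac12 P+Q\le 2-\tfrac12 s$ is exactly the paper's display \eqref{subalg2}, and the choice of $\gamma$ and the balancing of the two sub-algorithms at $s=y_0$ match the paper's argument. The extra touches (verifying $A\ge 1$ so that $\gamma=\ln A\ge 0$, the existence/uniqueness of $y_0$, and the remark on derandomization) are harmless refinements of details the paper leaves implicit.
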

\begin{proof}
By Lemmas~\ref{alg1ratio} and \ref{alg2ratio}, for any constant $\gamma\geq 0$, taking the better solution produced by \textsc{SubAlg}.1 and \textsc{SubAlg}.2 (with parameter $\gamma$) achieves an approximation ratio of at most
\begin{equation}\label{alg1}
\begin{split}
&\min_{\gamma\geq 0}\max_{\int_0^1 xdF(x)=1}\left\{\alpha + 1 + {\frac{3}{2}\int_0^\frac{1}{3} xdF(x)},\right. \\
&\quad\quad\left.\alpha + \gamma+e^{-\gamma}\cdot\lrA{\frac{3}{2}\int_0^\frac{1}{3} xdF(x) + 3\int_\frac{1}{3}^{\frac{1}{2}} xdF(x) - \frac{1}{2}\int_\frac{1}{3}^{\frac{1}{2}} 1dF(x)+\int_{\frac{1}{2}}^1 1dF(x)}\right\}.
\end{split}
\end{equation}

Notably, we have
\begin{equation}\label{subalg2}
\begin{split}
&\alpha+\gamma+e^{-\gamma}\cdot\lrA{\frac{3}{2}\int_0^\frac{1}{3} xdF(x) + 3\int_\frac{1}{3}^{\frac{1}{2}} xdF(x) - \frac{1}{2}\int_\frac{1}{3}^{\frac{1}{2}} 1dF(x)+\int_{\frac{1}{2}}^1 1dF(x)}\\
&\leq\alpha+\gamma+e^{-\gamma}\cdot\lrA{\frac{3}{2}\int_0^\frac{1}{3} xdF(x) + 3\int_\frac{1}{3}^{\frac{1}{2}} xdF(x) - \int_\frac{1}{3}^{\frac{1}{2}} xdF(x)+2\int_{\frac{1}{2}}^1 xdF(x)}\\
&=\alpha+\gamma+e^{-\gamma}\cdot\lrA{2-\frac{1}{2}\int_0^\frac{1}{3} xdF(x)},
\end{split}
\end{equation}
where the first inequality follows from \eqref{intineq}, and the equality follows from \eqref{inteq}.

Let $y=\int_0^\frac{1}{3} xdF(x)$, which satisfies $0\leq y\leq 1$.
Hence, by setting $\gamma = \ln\lrA{2-\frac{1}{2}y}$, and applying \eqref{alg1} and \eqref{subalg2}, the approximation ratio of \textsc{Alg}.1 is at most
\[
\min\max_{0\leq y\leq 1}\lrC{\alpha+1+\frac{3}{2} y,\ \alpha+1+\ln\lrA{2-\frac{1}{2} y}}=\alpha+1+\ln\lrA{2-\frac{y_0}{2}}<\alpha+1.5897,
\]
where $y_0>0.39312$ is the unique root of the equation $\ln\lrA{2-\frac{y}{2}}=\frac{3}{2} y$.

Therefore, setting $\gamma = \ln\lrA{2-\frac{1}{2}y_0}$, \textsc{Alg}.1 achieves the desired approximation ratio.
\end{proof}

We remark that, by the proof of Theorem~\ref{res1}, taking the better solution produced by \textsc{SubAlg}.1 and \textsc{SubAlg}.2 (with parameter $\gamma=0$) achieves an approximation ratio of at most 
\[
\min\max_{\int_0^1 xdF(x)=1}\lrC{\alpha + 1 + {\frac{3}{2}\int_0^\frac{1}{3} xdF(x)},\ \alpha+2-\frac{1}{2}\int_0^\frac{1}{3} xdF(x)}=\alpha+1.75,
\]
which matches the approximation ratio of the combinatorial algorithm in \citep{friggstad2025improved}.
Note that this algorithm runs in polynomial time, since setting $\gamma=0$ eliminates the need to solve the \ac{LP}.

\section{The Algorithm for the Unsplittable \ac{CVRP} with General Capacity}
In this section, we present \textsc{Alg}.2, which focuses on the case of general vehicle capacity.
\textsc{Alg}.2 is based on three sub-algorithms, \textsc{SubAlg.1}, \textsc{SubAlg.3}, and \textsc{SubAlg.4}, and achieves an approximation ratio of $\alpha+1+y_1+\ln\lrA{2-2y_1}+\delta<\alpha+1.6759$ for any small constant $0<\delta<10^{-5}$, where $y_1>0.17458$ is the unique root of the equation $\frac{1}{2} y_1+ 6 (1-y_1)\lrA{1-e^{-\frac{1}{2} y_1}} =\ln\lrA{2-2y_1}$.
Hence, when $\alpha=1.5$, the approximation ratio is at most $3.1759$.

Next, we introduce \textsc{SubAlg.3} and \textsc{SubAlg.4}, and then present the analysis of \textsc{Alg}.2.

\subsection{The \textsc{SubAlg}.3}
\textsc{SubAlg}.3 is similar to \textsc{SubAlg.2}.
It first applies a randomized \ac{LP}-rounding method to select a set of tours and then applies the $\frac{1}{3}$-\ac{ITP}+ algorithm to serve the remaining customers.

However, since $n^{\O(k)}$ is not polynomial when the vehicle capacity $k$ is general, \textsc{SubAlg}.3 considers only tours that serve non-small customers with demand $d_v>\delta$, i.e., the customers in in $V\setminus V^\delta_s$, for some small constant $0<\delta <\frac{1}{3}$. 
Each such tour serves at most $\frac{1}{\delta}$ non-small customers, and hence there are at most $n^{\O(1/\delta)}$ possible tours, denoted by $\T'$. 
For each tour $T\in\T'$, we introduce a variable $x'_T$.
\textsc{SubAlg}.3 formulates and solves the following \ac{LP}.
\begin{alignat}{3}
\text{minimize} & \quad & \sum_{T\in\T'} c(T)\cdot x'_T \tag{\textbf{\ac{LP}-2}}\\
\text{subject to} && \sum_{\substack{T\in \T':\\ v\in V(T)}}x'_T \geq\   & 1, \quad && \forall\  v \in V\setminus V^\delta_s, \notag\\
&& x'_T \geq\   & 0, \quad && \forall\  T \in \T'. \notag
\end{alignat}

An optimal fractional solution $\{x'^*_T\}_{T\in\T'}$ to the above \ac{LP} can be computed in $n^{\O(1/\delta)}$ time.
By shortcutting all customers in $V^\delta_s$ from an optimal solution to the unsplittable \ac{CVRP}, one can obtain a set of tours $\T_c\subseteq \T'$ serving all customers in $V\setminus V^\delta_s$. Moreover, by the triangle inequality, it holds that $c(\T_c)\leq\OPT$.
Therefore, the cost of the optimal fractional solution to the above \ac{LP} provides a valid lower bound on $\OPT$, that is,
\[
\sum_{T\in\T'}c(T)\cdot x'^*_T\leq\OPT.
\]

Similarly, \textsc{SubAlg}.3 constructs a set of tours $\T_\text{round}$ by selecting each tour $T\in\T'$ independently with probability $\min\{1,\gamma\cdot x'^*_T\}$ for some constant $\gamma\geq 0$.
Due to randomness, some customers may not be covered by any tour in $\T_\text{round}$. 
Denote the set of uncovered customers by $V_{\text{left}}\subseteq V$.
To serve these remaining customers, \textsc{SubAlg}.3 also applies the $\frac{1}{3}$-\ac{ITP}+ algorithm in Lemma~\ref{nditp}, based on a \ac{TSP} tour on $V_{\text{left}} \cup \{r\}$ obtained by shortcutting an $\alpha$-approximate \ac{TSP} tour on $V \cup \{r\}$.

By the proof of Lemma~\ref{alg2ratio}, we obtain the following bound on the approximation ratio.

\begin{lemma}\label{alg3ratio}
Given any constant $\gamma\geq 0$, any constant $0<\delta<\frac{1}{3}$, and an $\alpha$-approximate \ac{TSP} tour $\A$ on $V\cup \{r\}$, \textsc{SubAlg}.3 computes, in $n^{\O(1/\delta)}$ time, a solution to the unsplittable \ac{CVRP} with an expected approximation ratio of at most
\[
\alpha + \frac{3}{2}\int_0^\delta xdF(x) + \gamma+e^{-\gamma}\cdot\lrA{\frac{3}{2}\int_\delta^\frac{1}{3} xdF(x) + 3\int_\frac{1}{3}^{\frac{1}{2}} xdF(x) - \frac{1}{2}\int_\frac{1}{3}^{\frac{1}{2}} 1dF(x)+\int_{\frac{1}{2}}^1 1dF(x)}.
\]
\end{lemma}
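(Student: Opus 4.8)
The plan is to reuse the proof of Lemma~\ref{alg2ratio} essentially verbatim, changing one structural point: because \textbf{\ac{LP}-2} carries covering constraints only for the non-small customers $V\setminus V^\delta_s$, every small customer $v\in V^\delta_s$ is \emph{deterministically} uncovered by $\T_{\text{round}}$, hence always lies in $V_{\text{left}}$ and is served in full by $\frac{1}{3}$-\ac{ITP}+. This is precisely why the term $\frac{3}{2}\int_0^\delta xdF(x)$ appears \emph{outside} the $e^{-\gamma}$ factor in the claimed bound rather than inside it.

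I would proceed as follows. First, record the two facts established just before the lemma: the optimum of \textbf{\ac{LP}-2} is at most $\OPT$ (shortcutting the small customers out of an optimal unsplittable solution yields the feasible integral solution $\T_c\subseteq\T'$ of cost at most $\OPT$ by the triangle inequality), and the randomized rounding -- the same as in \textsc{SubAlg}.2 -- gives $\mathbb{E}[c(\T_{\text{round}})]\le\gamma\sum_{T\in\T'}c(T)\,x'^{*}_T\le\gamma\cdot\OPT$ together with $\Pr[v\in V_{\text{left}}]\le e^{-\gamma}$ for each $v\in V\setminus V^\delta_s$; moreover $\Pr[v\in V_{\text{left}}]=1$ for $v\in V^\delta_s$ since no tour of $\T'$ touches such a customer. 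Second, apply Lemma~\ref{nditp} with parameter $\frac{1}{3}$ to $V_{\text{left}}\cup\{r\}$, using the \ac{TSP} tour $\A'$ obtained by shortcutting $\A$, so that $c(\A')\le c(\A)\le\alpha\cdot c(\A^{*})\le\alpha\cdot\OPT$; with $\frac{1}{1-1/3}=\frac{3}{2}$, $\frac{2}{1-1/3}=3$, and $\frac{1/3}{1-1/3}=\frac{1}{2}$, this bounds the cost of the $\frac{1}{3}$-\ac{ITP}+ part by
\[
c(\A)+\tfrac{3}{2}\!\!\sum_{v\in V^{1/3}_s\cap V_{\text{left}}}\!\!2d_vc(r,v)+3\!\!\sum_{v\in V^{1/3}_b\cap V_{\text{left}}}\!\!2d_vc(r,v)-\tfrac{1}{2}\!\!\sum_{v\in V^{1/3}_b\cap V_{\text{left}}}\!\!2c(r,v)+\!\!\sum_{v\in V^{1/3}_l\cap V_{\text{left}}}\!\!2c(r,v).
\]
Third, take expectations: split the first sum according to $d_v\le\delta$ versus $\delta<d_v\le\frac{1}{3}$; the $d_v\le\delta$ part keeps coefficient $1$ (those customers are in $V_{\text{left}}$ with probability one), whereas all the sums over customers with $d_v>\delta$ acquire the factor $e^{-\gamma}$. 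Adding $\mathbb{E}[c(\T_{\text{round}})]\le\gamma\OPT$ and $c(\A)\le\alpha\OPT$, dividing through by $\sum_{v\in V}2d_vc(r,v)$, rewriting each sum via the definition~\eqref{int}, and using $\sum_{v\in V}2d_vc(r,v)\le\OPT$ (Lemma~\ref{lowerbound}) yields exactly the stated ratio. The running time is dominated by solving \textbf{\ac{LP}-2}, which has $n^{\O(1/\delta)}$ variables, hence $n^{\O(1/\delta)}$ overall; the rounding step and $\frac{1}{3}$-\ac{ITP}+ run in polynomial time.

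The only place the argument genuinely departs from Lemma~\ref{alg2ratio} is the deterministic membership of $V^\delta_s$ in $V_{\text{left}}$, which forces the clean split of the integral at $\delta$; everything else is routine bookkeeping. The mild care required is to keep the thresholds straight: $\frac{1}{3}$-\ac{ITP}+ classifies the customers of $V_{\text{left}}$ as small/big/large relative to $\frac{1}{3}$ and $\frac{1}{2}$, and these classes must be intersected with the ``$d_v\le\delta$ vs.\ $d_v>\delta$'' dichotomy coming from \textbf{\ac{LP}-2}, so the relevant demand ranges are $(0,\delta]$, $(\delta,\tfrac{1}{3}]$, $(\tfrac{1}{3},\tfrac{1}{2}]$, and $(\tfrac{1}{2},1]$. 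I do not expect any obstacle beyond this.
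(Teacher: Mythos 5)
Your proposal is correct and follows exactly the route the paper intends (the paper gives no separate proof of this lemma, deferring to the proof of Lemma~\ref{alg2ratio}); the deterministic membership of $V^\delta_s$ in $V_{\text{left}}$ is indeed the only structural change, and it is what pulls $\frac{3}{2}\int_0^\delta x\,dF(x)$ outside the $e^{-\gamma}$ factor. One step to make explicit: when taking expectations, the negative term $-\frac{1}{2}\sum_{v\in V^{1/3}_b\cap V_{\text{left}}}2c(r,v)$ cannot be assigned the factor $e^{-\gamma}$ on its own (that would require a \emph{lower} bound on $\Pr[v\in V_{\text{left}}]$); instead group it per customer with the positive term $3\cdot 2d_v c(r,v)$, note that the combined coefficient $3d_v-\frac{1}{2}$ is nonnegative for $d_v>\frac{1}{3}$, and only then apply $\Pr[v\in V_{\text{left}}]\leq e^{-\gamma}$.
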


\subsection{The \textsc{SubAlg}.4}
\textsc{SubAlg}.4 employs the same \ac{LP}-rounding technique as in \textsc{SubAlg.3}, but replaces the $\frac{1}{3}$-\ac{ITP}+ algorithm with the $\delta$-\ac{ITP}+ algorithm to serve the remaining customers.

\begin{lemma}\label{alg4ratio}
Given any constant $\gamma\geq 0$, any constant $0<\delta<\frac{1}{3}$, and an $\alpha$-approximate \ac{TSP} tour $\A$ on $V\cup \{r\}$, \textsc{SubAlg}.4 computes, in $n^{\O(1/\delta)}$ time, a solution to the unsplittable \ac{CVRP} with an expected approximation ratio of at most
\[
\alpha + \frac{1}{1-\delta}\int_0^\delta xdF(x) + \gamma+e^{-\gamma}\cdot\lrA{\frac{2}{1-\delta}\int_\delta^{\frac{1}{2}} xdF(x) - \frac{\delta}{1-\delta}\int_\delta^{\frac{1}{2}} 1dF(x)+\int_{\frac{1}{2}}^1 1dF(x)}.
\]
\end{lemma}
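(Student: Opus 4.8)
The plan is to follow the proofs of Lemmas~\ref{approxratio} and~\ref{alg3ratio} almost verbatim, with the $\frac{1}{3}$-\ac{ITP}+ step of \textsc{SubAlg}.3 replaced by the $\delta$-\ac{ITP}+ step and the costs re-accounted accordingly. Let $\{x'^*_T\}_{T\in\T'}$ be an optimal solution to \ac{LP}-2, so $\sum_{T\in\T'}c(T)\cdot x'^*_T\leq\OPT$, and let $\T_{\text{round}}$ be the randomly rounded family. Exactly as for \textsc{SubAlg}.3, the rounding lemma of \citet{friggstad2025improved} gives $\EE{c(\T_{\text{round}})}\leq\gamma\cdot\OPT$ and, since \ac{LP}-2 only imposes a covering constraint on non-small customers, $\PP{v\in V_{\text{left}}}\leq e^{-\gamma}$ for every $v\in V\setminus V^\delta_s$. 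The structural observation I need is that every tour of $\T'$ serves only non-small customers, so every customer of $V^\delta_s$ is left uncovered with probability $1$; equivalently $V^\delta_s\subseteq V_{\text{left}}$ deterministically. Note that, because the $\delta$-\ac{ITP}+ small-demand threshold coincides with the \ac{LP}-2 threshold $\delta$, the customers that are ``small'' for the $\delta$-\ac{ITP}+ step are exactly the ones that \ac{LP}-2 never covers, which is what makes the bound come out cleanly (unlike in \textsc{SubAlg}.3, where the two thresholds differ).

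Next I would bound the cost contributed by the $\delta$-\ac{ITP}+ call. Let $\A'$ be the \ac{TSP} tour on $V_{\text{left}}\cup\{r\}$ obtained by shortcutting $\A$; then $c(\A')\leq c(\A)\leq\alpha\cdot c(\A^*)\leq\alpha\cdot\OPT$ deterministically, so $\A'$ introduces no correlation difficulty. Applying Lemma~\ref{nditp} to the instance induced on $V_{\text{left}}$ and grouping the two ``big'' terms, the cost of this call is at most
\[
c(\A') + \tfrac{1}{1-\delta}\!\!\sum_{v\in V_{\text{left}}\cap V^\delta_s}\!\!\!2d_v c(r,v) + \sum_{v\in V_{\text{left}}\cap V^\delta_b}\!\!\!\tfrac{2(2d_v-\delta)}{1-\delta}c(r,v) + \sum_{v\in V_{\text{left}}\cap V^\delta_l}\!\!\!2c(r,v).
\]
Grouping the big-customer terms this way is what makes the rest routine: the per-customer coefficient $\tfrac{2(2d_v-\delta)}{1-\delta}$ is nonnegative because $d_v>\delta$, so when I take expectations term by term the one-sided bound $\PP{v\in V_{\text{left}}}\leq e^{-\gamma}$ suffices for the big and the large terms, while the small term is deterministic by $V^\delta_s\subseteq V_{\text{left}}$. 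Adding $\EE{c(\T_{\text{round}})}\leq\gamma\cdot\OPT$, the expected total cost is at most
\[
\alpha\cdot\OPT + \gamma\cdot\OPT + \tfrac{1}{1-\delta}\!\sum_{v\in V^\delta_s}\!2d_v c(r,v) + e^{-\gamma}\!\Big(\sum_{v\in V^\delta_b}\!\tfrac{2(2d_v-\delta)}{1-\delta}c(r,v) + \sum_{v\in V^\delta_l}\!2c(r,v)\Big).
\]

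Finally I would divide by $\OPT$. By Lemma~\ref{lowerbound}, $\OPT\geq\sum_{v\in V}2d_v c(r,v)$, and since every summand on the right above is nonnegative, normalizing each customer-weighted sum by $\sum_{v\in V}2d_v c(r,v)$ only increases the ratio; expanding $\tfrac{2(2d_v-\delta)}{1-\delta}c(r,v)=\tfrac{2}{1-\delta}\cdot 2d_v c(r,v)-\tfrac{\delta}{1-\delta}\cdot 2c(r,v)$ in the numerator and using the definition~\eqref{int} turns these sums into the stated $\int\cdot\,dF(x)$ quantities, yielding exactly the bound in the statement. The running time is dominated by solving \ac{LP}-2, which takes $n^{\O(1/\delta)}$ time; the rounding and the $\O(n^2)$-time $\delta$-\ac{ITP}+ call add only polynomial overhead. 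There is no real obstacle: the argument is that of Lemmas~\ref{approxratio} and~\ref{alg3ratio}, and the only delicate points are the bookkeeping of which customers can lie in $V_{\text{left}}$ and with what probability (small customers surely, non-small customers with probability $\leq e^{-\gamma}$) and keeping the big-customer terms grouped, so that only the one-sided probability bound, and at the normalization step only the lower bound $\OPT\geq\sum_{v\in V}2d_v c(r,v)$ applied to a nonnegative quantity, are needed.
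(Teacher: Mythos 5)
Your proof is correct and follows the route the paper intends (the paper gives no explicit proof of this lemma, deferring to the arguments of Lemmas~\ref{approxratio} and~\ref{alg2ratio}). You also correctly handle the two points that the paper leaves implicit: that $V^\delta_s\subseteq V_{\text{left}}$ deterministically because \ac{LP}-2 tours cover only non-small customers (which is why the $\frac{1}{1-\delta}\int_0^\delta x\,dF(x)$ term carries no $e^{-\gamma}$ factor), and that the negative $-\frac{\delta}{1-\delta}\cdot 2c(r,v)$ contribution must be grouped with the positive big-customer term into the nonnegative coefficient $\frac{2(2d_v-\delta)}{1-\delta}$ before applying the one-sided bound $\Pr[v\in V_{\text{left}}]\leq e^{-\gamma}$.
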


\subsection{The Final Approximation Ratio}
\textsc{Alg}.2 simply returns the best solution produced by \textsc{SubAlg}.1, \textsc{SubAlg}.3, and \textsc{SubAlg}.4.
We are now ready to analyze the approximation ratio of \textsc{Alg}.2.

\begin{theorem}\label{res2}
Given any constant $0<\delta<10^{-5}$ and an $\alpha$-approximate \ac{TSP} tour $\A$ on $V\cup \{r\}$, there exists an $n^{\O(1/\delta)}$-time algorithm for the unsplittable \ac{CVRP} that achieves an approximation ratio of $\alpha+1+y_1+\ln\lrA{2-2y_1}+\delta<\alpha+1.6759$, where $y_1>0.17458$ is the unique root of the equation $\frac{1}{2}y+ 6 (1-y) \lrA{1-e^{-\frac{1}{2} y}} =\ln\lrA{2-2 y}$.
\end{theorem}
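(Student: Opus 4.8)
The plan is to combine the three per-sub-algorithm approximation bounds (Lemmas~\ref{alg1ratio}, \ref{alg3ratio}, and~\ref{alg4ratio}) exactly as in the proof of Theorem~\ref{res1}: since \textsc{Alg}.2 returns the best of the three solutions, its approximation ratio is bounded by the minimum over the free parameter $\gamma\ge 0$ of the maximum of the three expressions, where the maximum is also taken over all valid ``demand distributions'' $F$ (i.e.\ subject to $\int_0^1 xdF(x)=1$ and $F$ monotone). The first step is to simplify the bounds for \textsc{SubAlg.3} and \textsc{SubAlg.4} using the inequalities~\eqref{intineq} and the identity~\eqref{inteq}, just as in~\eqref{subalg2}: bound each ``$1dF$'' term against a nearby ``$xdF$'' term so that the bracketed quantities collapse into something of the shape $2-c\cdot\int_0^{\cdot}xdF(x)$. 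For \textsc{SubAlg.4}, after collapsing, one should also use $\delta<10^{-5}$ to absorb the $O(\delta)$ slack into the additive $+\delta$ in the claimed ratio (e.g.\ $\tfrac{1}{1-\delta}\le 1+2\delta$ and similar), leaving a clean expression in terms of $y \coloneqq \int_0^{1/3}xdF(x)$ (or a closely related aggregate variable) up to an additive $\delta$.

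The second step is the reduction to a one-variable min–max. As in Theorem~\ref{res1}, introduce $y=\int_0^{1/3}xdF(x)\in[0,1]$; then \textsc{SubAlg.1} gives the increasing-in-$y$ bound $\alpha+1+\tfrac32 y$, while the collapsed \textsc{SubAlg.3}/\textsc{SubAlg.4} bounds, after choosing $\gamma$ optimally, give decreasing-in-$y$ bounds. The key extra ingredient compared to \textsc{Alg}.1 is that \textsc{SubAlg.3} and \textsc{SubAlg.4} play complementary roles: \textsc{SubAlg.3} treats customers with demand in $(\delta,\tfrac13]$ via the LP and charges the small customers $(0,\delta]$ at the $\tfrac32$-rate, whereas \textsc{SubAlg.4} charges those same small customers at the cheaper $\tfrac{1}{1-\delta}$-rate but pays more for the mid-range customers via $\delta$-\textsc{ITP}+. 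Taking the better of the two, one should split on the total small-demand mass $z\coloneqq\int_0^{\delta}xdF(x)$: when $z$ is small, \textsc{SubAlg.3} is better, and when $z$ is large, \textsc{SubAlg.4} is better; in either regime the worst case over $F$ reduces, after using~\eqref{intineq}–\eqref{inteq} to eliminate the remaining integrals, to the single variable $y$, yielding a bound of the form $\alpha+1+y+\gamma+e^{-\gamma}g(y)+O(\delta)$ for an explicit $g$.

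The third step is to optimize over $\gamma$. Minimizing $\gamma+e^{-\gamma}g(y)$ over $\gamma\ge0$ gives $\gamma=\ln g(y)$ and value $1+\ln g(y)$ (when $g(y)\ge1$), so the bound becomes $\alpha+1+y+\ln g(y)+O(\delta)$ with $g(y)=2-2y$ after the dust settles — this is where the term $\ln(2-2y)$ in the statement comes from. Then balance the increasing branch $\alpha+1+\tfrac32 y$ (from \textsc{SubAlg.1}, contributing through $y$) against the decreasing branch $\alpha+1+y+\ln(2-2y)$; the crossover is at the root $y_1$ of $\tfrac12 y+6(1-y)(1-e^{-y/2})=\ln(2-2y)$, and plugging $y=y_1$ gives $\alpha+1+y_1+\ln(2-2y_1)+\delta<\alpha+1.6759$. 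The running time $n^{O(1/\delta)}$ is immediate since \textsc{SubAlg.1} runs in $O(n^3)$ time and both \textsc{SubAlg.3} and \textsc{SubAlg.4} solve an LP with $n^{O(1/\delta)}$ variables.

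The main obstacle is the second step: unlike Theorem~\ref{res1}, the worst-case analysis here genuinely involves two aggregate quantities ($y$ and the small-mass $z$), and one must argue carefully that the pointwise maximum of the \textsc{SubAlg.3} and \textsc{SubAlg.4} bounds, after the $\gamma$-optimization, is still maximized (over feasible $F$) at a configuration describable by $y$ alone — equivalently, that the precise form of the function $\frac12 y_1+6(1-y_1)(1-e^{-y_1/2})$ in the root equation is exactly what emerges from this two-parameter worst case. Verifying the uniqueness of the root $y_1$ and the numerical bound $<1.6759$ is then routine calculus.
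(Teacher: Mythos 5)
Your proposal is correct and follows essentially the same route as the paper: simplify the three bounds via \eqref{intineq}--\eqref{inteq}, absorb the $O(\delta)$ slack from the $\frac{1}{1-\delta}$ factors, optimize each rounding parameter to the form $\gamma=\ln g$, and balance the increasing \textsc{SubAlg}.1 branch against the two decreasing LP-rounding branches. The obstacle you flag at the end is handled in the paper simply by keeping both aggregates $y_1=\int_0^\delta x\,dF(x)$ and $y_2=\int_\delta^{1/3}x\,dF(x)$ explicit (with independent parameters $\gamma_1,\gamma_2$ for \textsc{SubAlg}.3 and \textsc{SubAlg}.4), setting $\gamma_1=\ln\lrA{2-2y_1-\tfrac12 y_2}$ and $\gamma_2=\ln\lrA{2-2y_1}$, and equating the three resulting expressions $\tfrac32(y_1+y_2)$, $\tfrac32 y_1+\ln\lrA{2-2y_1-\tfrac12 y_2}$, and $y_1+\ln\lrA{2-2y_1}$, which forces $y_2=4(1-y_1)\lrA{1-e^{-y_1/2}}$ and then yields exactly the root equation you quote.
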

\begin{proof}
By Lemmas~\ref{alg1ratio}, \ref{alg3ratio}, and \ref{alg4ratio}, for any constants $\gamma_1,\gamma_2\geq 0$, taking the best solution produced by \textsc{SubAlg}.1, \textsc{SubAlg}.3 (with parameter $\gamma_1$), and \textsc{SubAlg}.4 (with parameter $\gamma_2$) achieves an approximation ratio of at most
\begin{equation}\label{alg2}
\begin{split}
&\min_{\gamma_1,\gamma_2\geq 0}\max_{\int_0^1 xdF(x)=1}\left\{\alpha + 1 + {\frac{3}{2}\int_0^\frac{1}{3} xdF(x)},\right.\\
&\quad\quad\alpha + \frac{3}{2}\int_0^\delta xdF(x)+ \gamma_1+e^{-\gamma_1}\cdot\lrA{\frac{3}{2}\int_\delta^\frac{1}{3} xdF(x) + 3\int_\frac{1}{3}^{\frac{1}{2}} xdF(x) - \frac{1}{2}\int_\frac{1}{3}^{\frac{1}{2}} 1dF(x)+\int_{\frac{1}{2}}^1 1dF(x)} \\
&\quad\quad\left. \alpha + \frac{1}{1-\delta}\int_0^\delta xdF(x) + \gamma_2+e^{-\gamma_2}\cdot\lrA{\frac{2}{1-\delta}\int_\delta^{\frac{1}{2}} xdF(x) - \frac{\delta}{1-\delta}\int_\delta^{\frac{1}{2}} 1dF(x)+\int_{\frac{1}{2}}^1 1dF(x)}\right\}.\\
\end{split}
\end{equation}

Notably, we have
\begin{equation}\label{subalg3}
\begin{split}
&\alpha + \frac{3}{2}\int_0^\delta xdF(x) + \gamma_1+e^{-\gamma_1}\cdot\lrA{\frac{3}{2}\int_\delta^\frac{1}{3} xdF(x) + 3\int_\frac{1}{3}^{\frac{1}{2}} xdF(x) - \frac{1}{2}\int_\frac{1}{3}^{\frac{1}{2}} 1dF(x)+\int_{\frac{1}{2}}^1 1dF(x)}\\
&\leq\alpha+ \frac{3}{2}\int_0^\delta xdF(x)+\gamma_1+e^{-\gamma_1}\cdot\lrA{\frac{3}{2}\int_\delta^\frac{1}{3} xdF(x) + 3\int_\frac{1}{3}^{\frac{1}{2}} xdF(x) - \int_\frac{1}{3}^{\frac{1}{2}} xdF(x)+2\int_{\frac{1}{2}}^1 xdF(x)}\\
&=\alpha+\frac{3}{2}\int_0^\delta xdF(x)+\gamma_1+e^{-\gamma_1}\cdot\lrA{2-2\int_0^\delta xdF(x)-\frac{1}{2}\int_\delta^\frac{1}{3} xdF(x)},
\end{split}
\end{equation}
where the first inequality follows from \eqref{intineq} and the equality follows from \eqref{inteq}.

Similarly, we have
\begin{equation}\label{subalg4}
\begin{split}
&\alpha + \frac{1}{1-\delta}\int_0^\delta xdF(x) + \gamma_2+e^{-\gamma_2}\cdot\lrA{\frac{2}{1-\delta}\int_\delta^{\frac{1}{2}} xdF(x) - \frac{\delta}{1-\delta}\int_\delta^{\frac{1}{2}} 1dF(x)+\int_{\frac{1}{2}}^1 1dF(x)}\\
&\leq\alpha+ \frac{1}{1-\delta}\int_0^\delta xdF(x)+\gamma_2+e^{-\gamma_2}\cdot\lrA{\frac{2}{1-\delta}\int_\delta^\frac{1}{2} xdF(x) - \frac{2\delta}{1-\delta}\int_\delta^{\frac{1}{2}} xdF(x) +2\int_{\frac{1}{2}}^1 xdF(x)}\\
&=\alpha+\frac{1}{1-\delta}\int_0^\delta xdF(x)+\gamma_2+e^{-\gamma_2}\cdot\lrA{2-2\int_0^\delta xdF(x)}\\
&\leq \alpha+\int_0^\delta xdF(x)+\gamma_2+e^{-\gamma_2}\cdot\lrA{2-2\int_0^\delta xdF(x)}+2\delta,
\end{split}
\end{equation}
where the first inequality follows from \eqref{intineq}, the second inequality follows from $\frac{1}{1-\delta}\int_0^\delta xdF(x)= \int_0^\delta xdF(x)+\frac{\delta}{1-\delta}\int_0^\delta xdF(x)\leq \int_0^\delta xdF(x)+\frac{\delta}{1-\delta}\leq\int_0^\delta xdF(x)+ 2\delta$, and the equality follows from \eqref{inteq}.

Let $y_1=\int_0^\delta xdF(x)$ and $y_2=\int_\delta^\frac{1}{3} xdF(x)$, which satisfies $0\leq y_1, y_2\leq 1$ and $y_1+y_2\leq 1$.
Hence, by setting $\gamma_1 = \ln\lrA{2-2y_1-\frac{1}{2}y_2}$ and $\gamma_2=\ln\lrA{2-2y_1}$, and using \eqref{alg2}--\eqref{subalg4}, the approximation ratio of \textsc{Alg.2} is at most
\begin{align*}
&\alpha+1+\min\max_{\substack{0\leq y_1,y_2\leq 1\\0\leq y_1+y_2\leq 1}}\lrC{\frac{3}{2}(y_1+y_2),\ \frac{3}{2} y_1+\ln\lrA{2-2 y_1-\frac{1}{2}y_2},\ y_1+\ln\lrA{2-2y_1}}+2\delta.
\end{align*}

Under $\frac{3}{2} y_1+\ln\lrA{2-2 y_1-\frac{1}{2} y_2} = y_1+\ln\lrA{2-2 y_1}$, we obtain 
\[
-\frac{1}{2} y_1 = \ln\lrA{1-\frac{\frac{1}{2} y_2}{2-2 y_1}} \quad\Longleftrightarrow\quad y_2 = 4 (1-y_1) \lrA{1-e^{-\frac{1}{2} y_1}}.
\]
Moreover, under $\frac{3}{2}(y_1+y_2) = y_1+\ln\lrA{2-2y_1}$, we obtain
\[
\frac{1}{2}y_1+ \frac{3}{2} y_2 =\ln\lrA{2-2 y_1} \quad\Longleftrightarrow\quad \frac{1}{2} y_1+ 6 (1-y_1) \lrA{1-e^{-\frac{1}{2} y_1}} =\ln\lrA{2-2 y_1}.
\]

Therefore, let $\delta$ be a small constant with $0<2\delta<10^{-5}$. 
Let $y_1>0.17458$ be the unique root of the equation $\frac{1}{2}y+ 6 (1-y) \lrA{1-e^{-\frac{1}{2} y}} =\ln\lrA{2-2 y}$ and define $y_2 = 4 (1-y_1) \lrA{1-e^{-\frac{1}{2} y_1}}$. 
Then, the approximation ratio of \textsc{Alg.2} is at most
\begin{align*}
&\alpha+1+\min\max_{\substack{0\leq y_1,y_2\leq 1\\0\leq y_1+y_2\leq 1}}\lrC{\frac{3}{2}(y_1+y_2),\ \frac{3}{2} y_1+\ln\lrA{2-2 y_1-\frac{1}{2}y_2},\ y_1+\ln\lrA{2-2y_1}}+2\delta\\
&=\alpha + 1 + y_1+\ln(2-2 y_1)+2\delta\\
&< \alpha+1.6759.
\end{align*}


Hence, setting $\gamma_1 = \ln\lrA{2-2y_1-\frac{1}{2}y_2}$ and $\gamma_2=\ln\lrA{2-2y_1}$ yields the desired approximation ratio.
\end{proof}

\section{Further Improvements via the Blauth, Traub, and Vygen Approach}
Fix a constant $\varepsilon>0$. 
\citet{blauth2022improving} classify \ac{CVRP} instances into two categories:
\begin{itemize}
    \item The instance is called \emph{simple} if $\sum_{v\in V} 2 d_v \cdot c(r,v)\leq (1-\varepsilon)\cdot\OPT$.
    \item Otherwise, the instance is called \emph{hard}.
\end{itemize}

\begin{lemma}[\cite{blauth2022improving}]\label{good}
For hard \ac{CVRP} instances, there exists a function $f: \mathbb{R}_{>0}\rightarrow\mathbb{R}_{>0}$ with $\lim_{{\varepsilon\rightarrow 0}}f(\varepsilon)=0$ and a polynomial-time algorithm that computes a \ac{TSP} tour $\A$ on $V\cup\{r\}$ such that $c(\A)\leq (1+f(\varepsilon))\cdot\OPT$.
\end{lemma}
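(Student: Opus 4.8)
The plan is to recall the structural insight behind the Blauth–Traub–Vygen improvement and explain how it yields a near-optimal TSP tour precisely on hard instances. Recall that for any CVRP instance, Lemma~\ref{lowerbound} gives the two lower bounds $\OPT \geq c(\A^*)$ (the optimal TSP tour on $V\cup\{r\}$) and $\OPT \geq \sum_{v\in V} 2 d_v\cdot c(r,v)$. The starting point is that on a \emph{hard} instance, by definition $\sum_{v\in V} 2 d_v\cdot c(r,v) > (1-\varepsilon)\cdot\OPT$, i.e.\ the radial lower bound is already within a $(1-\varepsilon)$ factor of $\OPT$. The key consequence, which I would establish first, is that on such instances the optimal TSP tour on $V\cup\{r\}$ is itself close to $\OPT$ — indeed trivially $c(\A^*)\leq \OPT$ — but more importantly the \emph{integrality gap} between a cheap spanning structure (a $2$-approximate tour, a Christofides tour, or the Held–Karp relaxation) and $\OPT$ must be small, because $\OPT$ is sandwiched between $c(\A^*)$ and the radial bound which are forced to be comparable.

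Next I would invoke the precise argument of \citet{blauth2022improving}: on a hard instance, the total demand-weighted radial cost $\sum_v 2 d_v c(r,v)$ being a $(1-\varepsilon)$-approximation of $\OPT$ means that the ``star'' cost dominates, and one can convert any TSP tour into one of cost at most $(1+f(\varepsilon))\OPT$ by a local-improvement / tour-splicing procedure that trades off edges of the tour against radial connections, where $f(\varepsilon)\to 0$ as $\varepsilon\to 0$. Concretely, the steps are: (i) compute an $\alpha$-approximate TSP tour $\A_0$ on $V\cup\{r\}$ in polynomial time; (ii) observe $c(\A_0)\leq \alpha\cdot c(\A^*)\leq \alpha\cdot \OPT$, which is not yet good enough; (iii) apply the Blauth–Traub–Vygen refinement, which uses the hardness hypothesis to argue that most of the tour's cost can be ``charged'' to the radial term and the surplus removed, yielding $c(\A)\leq (1+f(\varepsilon))\OPT$; (iv) note that each step runs in polynomial time for fixed $\varepsilon$, so the overall procedure is polynomial. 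Since all of this is exactly the content of the cited result, the proof here amounts to quoting \citep{blauth2022improving} with the hardness classification matched to their definition, and possibly re-deriving the bound $f(\varepsilon)\to 0$ from their stated parameters.

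The main obstacle — or rather the only real content — is verifying that our definition of \emph{hard} (namely $\sum_{v\in V} 2 d_v c(r,v) > (1-\varepsilon)\OPT$) coincides with the hypothesis under which \citet{blauth2022improving} prove their tour-improvement lemma, and tracking the exact form of $f$. If the definitions align verbatim, the lemma is immediate by citation; if they differ by a reparametrization, I would insert a short paragraph rescaling $\varepsilon$ and composing the two $o(1)$ functions. I do not expect to need any new combinatorial argument, since the heavy lifting is entirely in the prior work; the purpose of stating Lemma~\ref{good} here is to isolate exactly the guarantee — a $(1+f(\varepsilon))$-approximate TSP tour available in polynomial time on hard instances — that will be fed into \textsc{Alg}.1 and \textsc{Alg}.2 in place of the generic $\alpha$-approximate tour, thereby replacing $\alpha$ by $1+f(\varepsilon)$ in the final ratios on hard instances while the simple-instance case is handled directly by the bound $\sum_v 2d_v c(r,v)\leq (1-\varepsilon)\OPT$ plugged into Lemmas~\ref{approxratio}, \ref{alg1ratio}, and their analogues.
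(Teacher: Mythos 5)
Your proposal takes the same route as the paper: Lemma~\ref{good} is proved there purely by citation to \citet{blauth2022improving}, with only the explicit formula for $f(\varepsilon)$ recorded in Appendix~\ref{A1}, and you correctly identify that the only real work is checking that the ``hard instance'' condition $\sum_{v\in V}2d_v\cdot c(r,v)>(1-\varepsilon)\cdot\OPT$ matches their hypothesis and reading off $f$. One minor caveat: your gloss on the internal mechanism is not quite how their proof goes (they do not locally refine an $\alpha$-approximate tour, but rather stitch the tours of a near-optimal CVRP solution into a Hamiltonian cycle via a clustering argument); since both you and the paper ultimately invoke the result as a black box, this is immaterial.
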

The details of the function $f$ in Lemma~\ref{good} are provided in Appendix~\ref{A1}.

Currently, the best-known approximation ratio for the metric \ac{TSP} is $\alpha\approx\frac{3}{2}-10^{-36}$~\citep{DBLP:conf/ipco/KarlinKG23}.
We fix the constant $\varepsilon>0$ to be sufficiently small so that $1+f(\varepsilon)<\alpha$.

On one hand, for hard \ac{CVRP} instances, by using the TSP tour in Lemma~\ref{good}, the previous approximation ratios of \textsc{Alg}.1 and \textsc{Alg}.2 can be directly improved.
On the other hand, for simple \ac{CVRP} instances, since the previous approximation ratios were derived using the bound $\sum_{v\in V} 2 d_v \cdot c(r,v)\leq \OPT$ from Lemma~\ref{lowerbound}, we can instead apply the stronger bound $\sum_{v\in V} 2 d_v \cdot c(r,v)\leq (1-\varepsilon)\cdot\OPT$ to obtain a further improvement.

Therefore, by combining the above two cases, we can further improve the approximation ratios of \textsc{Alg}.1 and \textsc{Alg}.2 for general \ac{CVRP} instances.

By calculation (see the details in Appendix~\ref{A2}), we show that, for the case of fixed vehicle capacity, the improvement is at least $0.00031$ when $\varepsilon=0.000335$, and the final approximation ratio is at most $3.0894$; for the case of general vehicle capacity, the improvement achieves at least $0.00039$ when $\varepsilon=0.000334$, and the final approximation ratio is at most $3.1755$.

\bibliographystyle{apalike}
\bibliography{main}


\appendix
\section{The Function}\label{A1}
The function $f:\mathbb{R}_{>0}\rightarrow\mathbb{R}_{>0}$, given in~\citep{blauth2022improving}, satisfies that
\[
f(\varepsilon)=\min_{\substack{0<\theta\leq1-\tau \\ 0<\tau,\rho\leq 1/6}} \lrC{\frac{1+\zeta}{\theta}+\frac{1-\tau-\theta}{\theta\cdot(1-\tau)}+\frac{3\varepsilon}{1-\theta}+\frac{3\rho}{(1-\rho)\cdot(1-\tau)}}-1,
\]
where $\theta$ is let to be $1-\tau$ in~\citep{blauth2022improving} and
\[
\zeta=\frac{3\rho+\tau-4\tau\cdot\rho}{1-\rho}+\frac{\varepsilon}{\tau\cdot\rho}\cdot\lrA{1-\tau\cdot\rho-\frac{3\rho+\tau-4\tau\cdot\rho}{1-\rho}}.
\]

\section{The Calculation of the Improvements}\label{A2}
By carefully choosing the parameters, we present the improvements over the previous approximation ratios for \textsc{Alg}.1 and \textsc{Alg}.2, respectively.
Some details of the numerical calculations can be found at \url{https://github.com/JingyangZhao/UCVRP}.

\subsection{The Approximation Ratio for the Case of Fixed Vehicle Capacity}
We analyze the approximation ratios for easy \ac{CVRP} instances and hard \ac{CVRP} instances, respectively.

\textbf{Easy \ac{CVRP} instances.}
First, we analyze the approximation ratios of \textsc{SubAlg}.1 and \textsc{SubAlg}.2 (with parameter $\gamma$) for easy \ac{CVRP} instances.

Using the new bound $\sum_{v\in V} 2 d_v \cdot c(r,v)\leq (1-\varepsilon)\cdot\OPT$, and by the proof of Lemma~\ref{approxratio} together with Lemma~\ref{alg1ratio}, the approximation ratio of \textsc{SubAlg}.1 becomes
\begin{equation}\label{subalg1+}
\begin{split}
\alpha + 1 + {\frac{3}{2}(1-\varepsilon)\int_0^\frac{1}{3} xdF(x)}.
\end{split}
\end{equation}

Moreover, by Lemma~\ref{alg2ratio} and \eqref{subalg2}, the approximation ratio of \textsc{SubAlg}.2 (with parameter $\gamma$) becomes
\begin{equation}\label{subalg2+}
\begin{split}
\alpha + \gamma+e^{-\gamma}\cdot (1-\varepsilon)\cdot\lrA{2-\frac{1}{2}\int_0^\frac{1}{3} xdF(x)}.
\end{split}
\end{equation}

Let $y=\int_0^\frac{1}{3} xdF(x)$, which satisfies $0\leq y\leq 1$.
Hence, by setting $\gamma = \ln\lrB{(1-\varepsilon)\lrA{2-\frac{1}{2}y}}$, and applying \eqref{subalg1+} and \eqref{subalg2+}, the approximation ratio of \textsc{Alg}.1 is at most
\begin{equation}\label{feq1}
\min\max_{0\leq y\leq 1}\lrC{\alpha+1+\frac{3}{2}(1-\varepsilon) y,\ \alpha+1+\ln\lrB{(1-\varepsilon)\lrA{2-\frac{1}{2} y}}}=\alpha+1+\ln\lrB{(1-\varepsilon)\lrA{2-\frac{y_{0,\varepsilon}}{2}}},
\end{equation}
where $y_{0,\varepsilon}$ is the unique root of the equation $\ln\lrB{(1-\varepsilon)\lrA{2-\frac{y}{2}}}=\frac{3}{2}(1-\varepsilon) y$.

\textbf{Hard \ac{CVRP} instances.}
For hard \ac{CVRP} instances, by using the TSP tour in Lemma~\ref{good}, and applying Theorem~\ref{res1}, the approximation ratio of \textsc{Alg}.1 is at most
\begin{equation}\label{feq2}
2+f(\varepsilon)+\ln\lrA{2-\frac{y_0}{2}},
\end{equation}
where $y_0$ is the unique root of the equation $\ln\lrA{2-\frac{y}{2}}=\frac{3}{2} y$.

Setting $\varepsilon=0.000335$, we obtain $y_{0,\varepsilon}>0.39305$, so the approximation ratio for easy \ac{CVRP} instances is at most $\alpha+1+\ln\lrB{(1-\varepsilon)\lrA{2-\frac{y_{0,\varepsilon}}{2}}}<3.0894$ by \eqref{feq1}.
Moreover, we obtain $f(\varepsilon)<0.49967$ and $y_0>0.39312$, so the approximation ratio for hard \ac{CVRP} instances is at most $2+f(\varepsilon)+\ln\lrA{2-\frac{y_0}{2}}<3.0894$ by \eqref{feq2}.

Therefore, for the unsplittable \ac{CVRP} with fixed vehicle capacity, the final approximation ratio is at most $3.0894$.
Moreover, it can be verified that the achieved improvement is at least $0.00031$.

\subsection{The Approximation Ratio for the Case of General Vehicle Capacity}
We analyze the approximation ratios for easy \ac{CVRP} instances and hard \ac{CVRP} instances, respectively.

\textbf{Easy \ac{CVRP} instances.}
Similarly, we first analyze the approximation ratios of \textsc{SubAlg}.1, \textsc{SubAlg}.3 (with parameter $\gamma_1$), and \textsc{SubAlg}.4 (with parameter $\gamma_2$) for easy \ac{CVRP} instances.
Note that the approximation ratio of \textsc{SubAlg}.1 has been analyzed previously and is given in \eqref{alg2}.

Using the new bound $\sum_{v\in V} 2 d_v \cdot c(r,v)\leq (1-\varepsilon)\cdot\OPT$, and by Lemma~\ref{alg3ratio} and \eqref{subalg3}, the approximation ratio of \textsc{SubAlg}.3 (with parameter $\gamma_1$) becomes
\begin{equation}\label{subalg3+}
\begin{split}
\alpha + \frac{3}{2}(1-\varepsilon)\int_0^\delta xdF(x) + \gamma_1+e^{-\gamma_1}\cdot(1-\varepsilon)\cdot\lrA{2-2\int_0^\delta xdF(x)-\frac{1}{2}\int_\delta^\frac{1}{3} xdF(x)}.
\end{split}
\end{equation}

Moreover, by Lemma~\ref{alg4ratio} and \eqref{subalg4}, the approximation ratio of \textsc{SubAlg}.4 (with parameter $\gamma_2$) becomes
\begin{equation}\label{subalg4+}
\begin{split}
\alpha + (1-\varepsilon)\int_0^\delta xdF(x) + \gamma_2+e^{-\gamma_2}\cdot(1-\varepsilon)\cdot\lrA{2-2\int_0^\delta xdF(x)}+2\delta.
\end{split}
\end{equation}

Let $y_1=\int_0^\delta xdF(x)$ and $y_2=\int_\delta^\frac{1}{3} xdF(x)$, which satisfies $0\leq y_1, y_2\leq 1$ and $y_1+y_2\leq 1$.
Hence, by setting $\gamma_1 = \ln\lrB{(1-\varepsilon)\lrA{2-2y_1-\frac{1}{2}y_2}}$ and $\gamma_2=\ln\lrB{(1-\varepsilon)\lrA{2-2y_1}}$, and using \eqref{alg2}, \eqref{subalg3+}, and \eqref{subalg4+}, the approximation ratio of \textsc{Alg.2} is at most
\begin{align*}
&\alpha+1+\min\max_{\substack{0\leq y_1,y_2\leq 1\\0\leq y_1+y_2\leq 1}}\left\{\frac{3}{2}(1-\varepsilon)(y_1+y_2),\right.\\
&\quad\quad\left.\frac{3}{2}(1-\varepsilon) y_1+\ln\lrB{(1-\varepsilon)\lrA{2-2 y_1-\frac{1}{2}y_2}},\ (1-\varepsilon)y_1+\ln\lrB{(1-\varepsilon)\lrA{2-2y_1}}\right\}+2\delta.
\end{align*}

Under $\frac{3}{2}(1-\varepsilon) y_1+\ln\lrB{(1-\varepsilon)\lrA{2-2 y_1-\frac{1}{2} y_2}} = (1-\varepsilon)y_1+\ln\lrB{(1-\varepsilon)\lrA{2-2 y_1}}$, we obtain 
\[
-\frac{1}{2}(1-\varepsilon) y_1 = \ln\lrA{1-\frac{\frac{1}{2} y_2}{2-2 y_1}} \quad\Longleftrightarrow\quad y_2 = 4 (1-y_1) \lrA{1-e^{-\frac{1}{2}(1-\varepsilon) y_1}}.
\]
Moreover, under $\frac{3}{2}(1-\varepsilon)(y_1+y_2) = (1-\varepsilon)y_1+\ln\lrB{(1-\varepsilon)\lrA{2-2y_1}}$, we obtain
\begin{align*}
&\frac{1}{2}(1-\varepsilon)y_1+ \frac{3}{2}(1-\varepsilon) y_2 =\ln\lrB{(1-\varepsilon)\lrA{2-2 y_1}}\\
& \quad\Longleftrightarrow\quad \frac{1}{2}(1-\varepsilon) y_1+ 6(1-\varepsilon) (1-y_1) \lrA{1-e^{-\frac{1}{2}(1-\varepsilon) y_1}} =\ln\lrB{(1-\varepsilon)\lrA{2-2 y_1}}.
\end{align*}

Therefore, let $\delta$ be a sufficiently small constant, say $\delta=\frac{1}{2}\cdot 10^{-100}$. 
Let $y_{1,\varepsilon}$ be the unique root of the equation $\frac{1}{2}(1-\varepsilon)y+ 6(1-\varepsilon) (1-y) \lrA{1-e^{-\frac{1}{2}(1-\varepsilon) y}} =\ln\lrB{(1-\varepsilon)\lrA{2-2 y}}$ and define $y_{2,\varepsilon} = 4 (1-y_{1,\varepsilon}) \lrA{1-e^{-\frac{1}{2}(1-\varepsilon) y_{1,\varepsilon}}}$.
Then, the approximation ratio of \textsc{Alg.2} is at most
\begin{equation}\label{geq1}
\alpha + 1 + (1-\varepsilon)y_{1,\varepsilon}+\ln\lrB{(1-\varepsilon)(2-2 y_{1,\varepsilon})} + 10^{-100}.
\end{equation}

\textbf{Hard \ac{CVRP} instances.}
For hard \ac{CVRP} instances, by using the TSP tour in Lemma~\ref{good}, and applying Theorem~\ref{res2}, the approximation ratio of \textsc{Alg}.2 is at most
\begin{equation}\label{geq2}
2+f(\varepsilon)+y_1+\ln\lrA{2-2y_1},
\end{equation}
where $y_1$ is the unique root of the equation $\frac{1}{2}y+ 6 (1-y) \lrA{1-e^{-\frac{1}{2} y}} =\ln\lrA{2-2 y}$.

Setting $\varepsilon=0.000334$, we obtain $y_{1,\varepsilon}>0.17457$, so the approximation ratio for easy \ac{CVRP} instances is at most $\alpha + 1 + (1-\varepsilon)y_{1,\varepsilon}+\ln\lrB{(1-\varepsilon)(2-2 y_{1,\varepsilon})} + 10^{-100}<3.1755$ by \eqref{geq1}.
Moreover, we obtain $f(\varepsilon)<0.49915$ and $y_1>0.17458$, so the approximation ratio for hard \ac{CVRP} instances is at most $2+f(\varepsilon)+y_1+\ln\lrA{2-2y_1}<3.1751$ by \eqref{geq2}.

Therefore, for the unsplittable \ac{CVRP} with general vehicle capacity, the final approximation ratio is at most $3.1755$.
Moreover, it can be verified that the achieved improvement is at least $0.00039$.

\end{document}